\numberwithin{equation}{section}
\newtheorem{definition}{Definition}[section]
\newtheorem{lemma}[definition]{Lemma}
\newtheorem{theorem}[definition]{Theorem}
\newtheorem{proposition}[definition]{Proposition}
\newtheorem{remarkth}[definition]{Remark}
\renewcommand{\emph}[1]{{\bfseries\itshape{#1}}}
\numberwithin{figure}{section}
\newcommand{\R}{\mathbb{R}}      
\newcommand{\ltilde}[3][0]{\altura=0 \advance\altura by #1
           \ancho=#2 \anchom=\ancho \divide\anchom by 2
           \anchoa=\ancho \divide\anchoa by 4
           \anchob=\anchom \advance\anchob by \anchoa
           \kern-3pt \begin{array}[b]{c}
           \begin{picture}(1,1)(\anchom,-\altura)
        \qbezier(0,2)(\anchoa,5)(\anchom,2)
        \qbezier(\anchom,2)(\anchob,-1)(\ancho,4)
        \qbezier(0,2)(\anchoa,4.5)(\anchom,1.8)
        \qbezier(\anchom,1.8)(\anchob,-1.5)(\ancho,4)
       \end{picture} \\[-4pt]{#3}
                       \end{array} \kern-4pt    }
\newcommand{\lhat}[3][0]{\altura=0 \advance\altura by #1
           \ancho=#2 \anchom=\ancho \divide\anchom by 2
           \anchoa=\ancho \divide\anchoa by 4
           \anchob=\anchom \advance\anchob by \anchoa
           \kern-3pt \begin{array}[b]{c}
           \begin{picture}(1,1)(\anchom,-\altura)
        \qbezier(0,2)(\anchoa,4)(\anchom,6)
        \qbezier(\anchom,6)(\anchob,4)(\ancho,2)
        \qbezier(0,2)(\anchoa,3.8)(\anchom,5.6)
        \qbezier(\anchom,5.6)(\anchob,3.8)(\ancho,2)
       \end{picture} \\[-4pt] {#3}
                       \end{array} \kern-4pt    }
\newcommand\prol{\@ifstar{\@proldf}{\@prolpf}}  
\def\@prolpf{\@ifnextchar[{\@prolpf@wrt}{\@prolpf@}}
\def\@prolpf@wrt[#1]#2{\@ifnextchar[{\@prolpf@wrt@at{#1}{#2}}{\@prolpf@wrt@{#1}{#2}}}
\def\@prolpf@wrt@at#1#2[#3]{\prolsymbol^{#1}_{#3}#2}
\def\@prolpf@wrt@#1#2{\prolsymbol^{#1}#2}
\def\@prolpf@#1{\@ifnextchar[{\@prolpf@at{#1}}{\@prolpf@@{#1}}}
\def\@prolpf@at#1[#2]{\prolsymbol_{#2}#1}
\def\@prolpf@@#1{\prolsymbol#1}
\def\@proldf{\@ifnextchar[{\@proldf@wrt}{\@proldf@}}
\def\@proldf@wrt[#1]#2{\@ifnextchar[{\@proldf@wrt@at{#1}{#2}}{\@proldf@wrt@{#1}{#2}}}
\def\@proldf@wrt@at#1#2[#3]{\prolsymbol^{*#1}_{#3}#2}
\def\@proldf@wrt@#1#2{\prolsymbol^{*#1}#2}
\def\@proldf@#1{\@ifnextchar[{\@proldf@at{#1}}{\@proldf@@{#1}}}
\def\@proldf@at#1[#2]{\prolsymbol^*_{#2}#1}
\def\@proldf@@#1{\prolsymbol^*#1}
\def\prolsymbol{\mathcal{T}}
\newcommand{\Lag}{\mathcal{L}}
\begin{document}
{\Large

\title[ The dynamics of an articulated $n$-trailer  vehicle]{ The dynamics of an  articulated $n$-trailer  vehicle}

\author[A.\ Bravo-Doddoli]{Alejandro\ Bravo-Doddoli}
\address{Alejandro Bravo: Depto. de Matem\'aticas, Facultad de Ciencias, UNAM,
Circuito Exterior S/N, Ciudad Universitaria,  Mexico City,  04510, Mexico}
\email{Bravododdoli@ciencias.unam.mx}

\author[L. C.\ Garc\'{\i}a-Naranjo]{Luis C.\ Garc\'{\i}a-Naranjo}
\address{Luis C.\ Garc\'{\i}a-Naranjo:
Departamento de Matem\'aticas y Mec\'anica \\
IIMAS-UNAM \\
Apdo Postal 20-726,  Mexico City,  01000, Mexico}
\email{luis@mym.iimas.unam.mx}

\keywords{dynamics, nonholonomic constraints, $n$-trailer vehicle}

\subjclass[2010]{37J60,70F25,70G45,58A30}

\begin{abstract}
We derive the reduced equations of motion for an articulated $n$-trailer vehicle that moves under its own inertia
on the plane. We show that  the energy level surfaces in the reduced space are $(n+1)$-tori and we classify the 
equilibria within them, determining their stability. A thorough description of the dynamics is given in the case $n=1$.
\end{abstract}

\maketitle

\section{Introduction}

We consider the dynamics of an articulated $n$-trailer vehicle that moves under its own inertia. 
Such system consists of a leading car, or truck, that is pulling $n$ trailers,
like a luggage carrier in the airport. The leading car and the trailers form a convoy that is subjected to 
$(n+1)$-nonholonomic
constraints, one for each body.

This  system is a canonical example in  nonholonomic motion planning,  which is fundamental in robotics,
and has been extensively considered from the control perspective (see e.g.  \cite{Laumond1, Murray, Laumondbook} and the references therein).

The constraint distribution defined by the $n$-trailer system has also received interest in differential geometry.
 It is a Goursat distribution (see \cite{Jean}) and, as  shown in \cite{Mont-Z},  all possible Goursat germs
of corank $n+1$ are realized at its different points.

The dynamics of wheeled vehicles  moving on the plane has been considered in e.g.  \cite{Bolzern, Furta, Lutsenko}. The first
two of these deal with certain properties of the $n$-trailer system. However, the majority of the existing  references to 
this  system   deal with its kinematics  and  
 disregard its dynamical aspects. 
 To the best of our knowledge, a detailed study of the dynamics of the $n$-trailer system
is missing.

 The nonholonomic constraints in the $n$-trailer system
 arise by assuming that each of the bodies in the convoy has a pair of wheels that
prohibit  motion in the direction perpendicular to them.  Each of these constraints is
identical to the nonholonomic constraint for the well-known Chaplygin sleigh problem  \cite{Chap}. Hence, the  
articulated $n$-trailer vehicle is   a generalization of the Chaplygin sleigh system that is recovered when the number 
of trailers $n=0$. (Other generalizations of the Chaplygin sleigh are considered in \cite{Bo2009}).

We mention the thorough understanding of the dynamics of the Chaplygin sleigh has resulted in the design of control
algorithms, where the control mechanism moves the center of mass \cite{OZenk}. We are hopeful that the results in this paper will
turn out to be useful for control purposes.

The outline of the paper is as follows. In Section \ref{S:Def} we introduce the system and the notation that we will
follow in our work. We define the configuration space, the nonholonomic constraints and we
write down the kinetic energy of the system. We also discuss the related $SE(2)$ symmetry of the problem
and describe the  reduced space. In Section \ref{S:EqnsMain} we derive the reduced equations of motion \eqref{E:MainRed}. Our approach follows  the method suggested in \cite{Grab}. We
show that the energy level sets are diffeomorphic to $(n+1)$-tori, and we give working expressions for the restriction
of the flow to them. Section  \ref{S:dynagret0} considers the equilibria of the system assuming that
the center of mass of the leading car is displaced a distance $a>0$ from the wheel's axis. We give a complete classification
of all the equilibria in an energy level set and perform their  linear stability analysis. It is found that the straight line motion of the
convoy in the direction of the center of mass of the leading car and   with all of the trailers aligned behind it, is asymptotically stable.
In Section  \ref{S:dynaeq0} we deal with the case where the center of mass of the car lies on its wheels' axis. We give
necessary and sufficient conditions on the velocities for the existence of equilibria of the reduced system and we do an exhaustive treatment of the case
$n=1$. Finally, in Section \ref{S:sing} we comment on the interest to analyze the influence of the 
singular configurations on the dynamics.

\section{The $n$-trailer mobile vehicle}
\label{S:Def}

Following \cite{Laumond1, Murray} and other references given in these works, we consider a multi-body car system $(\mathcal{B}_0,\mathcal{B}_1, \dots , \mathcal{B}_n)$ that consists of a car $\mathcal{B}_0$ pulling $n$ trailers, $\mathcal{B}_1, \dots , \mathcal{B}_n$. The trailers form a 
convoy  (like in a luggage carrier)  that moves on the plane (see Figure \ref{F:Main-Diag} for the case $n=2$). 

Each body in the convoy has a set of wheels and we denote by $(x_i,y_i)$ the coordinates of the midpoint of the 
wheel's axis ($i=0,\dots, n$) with respect to a chosen cartesian frame.  The  orientation of $\mathcal{B}_i$ is
determined by the angle $\theta_i$ between the main axis of the body and the $x$ axis of the chosen frame (see Figure \ref{F:Main-Diag}).

\begin{figure}[h]
    \centering
    \includegraphics[width=16cm]{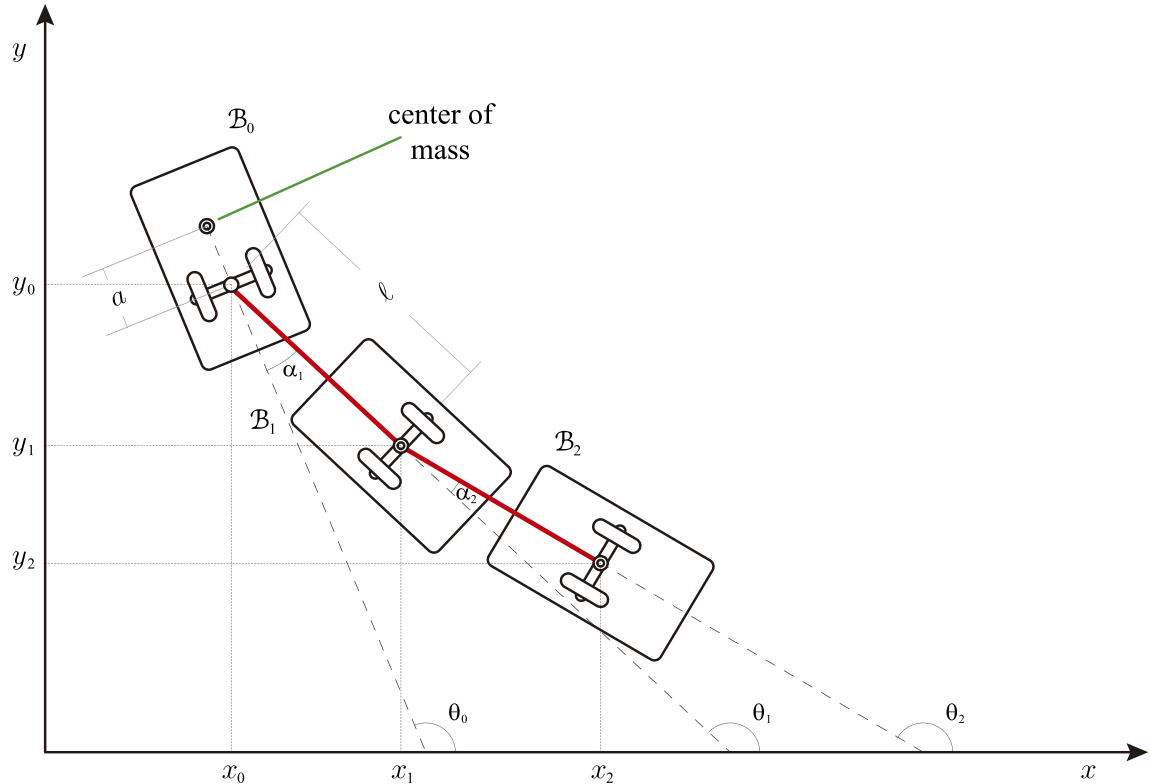}
    \caption{The $n$-trailer vehicle with $n=2$.}
    \label{F:Main-Diag}
\end{figure}

\subsection{Kinematics.} The convoy condition requires that the body $\mathcal{B}_i$ is hooked to the preceding body $\mathcal{B}_{i-1}$.
Following \cite{Laumond1, Murray} we assume that the hooking is done via a link of length $\ell$ that connects $(x_i,y_i)$ with
$(x_{i-1},y_{i-1})$ as illustrated in Figure \ref{F:Main-Diag}.\footnote{Other hooking mechanisms are possible
and have been considered in the literature. The Hilare robot at
LAAS Toulouse can realize various models, including the one that we consider in this paper \cite{Laumond2}.}  The hooking of the convoy thus defines the $2n$ holonomic constraints
\begin{equation}
\label{E:Hol-Const}
x_i+\ell \cos \theta_i -x_{i-1}=0, \qquad y_i+\ell \sin \theta_i -y_{i-1}=0, \qquad i=1,\dots, n.
\end{equation}

On the other hand, the wheels on each of the cars impose a nonholonomic constraint that forbids 
any motion of the given body in the direction perpendicular to its main axis. In this way we get the $n+1$ nonholonomic constraints
\begin{equation}
\label{E:NonHol-Const}
\dot x_i \sin \theta_i - \dot y_i \cos \theta_i =0, \qquad  i=0,1, \dots, n.
\end{equation}

In view of the holonomic constraints \eqref{E:Hol-Const}, the configuration of the convoy is fully determined by the 
value of the coordinates 
\begin{equation*}
x_0, y_0, \theta_0, \theta_1, \dots , \theta_n.
\end{equation*}
Therefore, the configuration space of the system is the $n+3$ dimensional manifold $Q=SE(2)\times \mathbb{T}^n$
where $SE(2)$ denotes the Euclidean group in the plane and $\mathbb{T}^n$ is the $n$-torus. The nonholonomic
constraints \eqref{E:NonHol-Const} define a rank $2$ constraint distribution $\mathcal{D}$ on $Q$.

\subsection{Dynamics.}

We assume that the center of mass of the leading car $\mathcal{B}_0$ is displaced a distance $a$ from the
midpoint of its wheel's axis $(x_0,y_0)$ along the principal axis of the body (see Figure \ref{F:Main-Diag}). Therefore, if $(x_C,y_C)$ denote the
coordinates of the center of mass of $\mathcal{B}_0$, we have 
\begin{equation}
\label{E:CofM}
x_C=x_0+a\cos \theta_0, \qquad y_C=y_0+a\sin \theta_0.
\end{equation}

We will denote the total mass of $\mathcal{B}_0$ by $M$ and  its moment of inertia about its center of mass by $J_0$. On the other hand, we shall suppose that the trailers $\mathcal{B}_1, \dots , \mathcal{B}_n$ are
identical, with their center of mass lying at the midpoint of the wheel's axis $(x_i,y_i)$. Their total mass is denoted by $m$ and the moment of inertia about $(x_i,y_i)$ by $J$.

The kinetic energy of the system is given by
\begin{equation*}
\mathcal{K}=\frac{1}{2}\left ( J_0\dot \theta_0^2 + M(\dot x_C^2+\dot y_C^2)+J\sum_{i=1}^n\dot \theta_i^2+m\sum_{i=1}^n(\dot x_i^2+y_i^2 )\right ).
\end{equation*}
Using \eqref{E:CofM} we get
\begin{equation*}
\mathcal{K}=\frac{1}{2}\left ( (J_0+Ma^2)\dot \theta_0^2 + M(\dot x_0^2+\dot y_0^2)+2Ma\dot \theta_0(\dot y_0\cos\theta_0- \dot x_0\sin \theta_0)+J\sum_{j=1}^n\dot \theta_j^2+m\sum_{i=1}^n(\dot x_i^2+y_i^2 )\right ).
\end{equation*}

The Lagrangian of the system $\Lag:TQ\to \R$ is obtained by expressing the above quantity in terms of the 
coordinates $(x_0, y_0, \theta_0, \theta_1, \dots , \theta_n)$ of $Q$. In order to eliminate $(\dot x_i, \dot y_i)$
we note that the holonomic constraints \eqref{E:Hol-Const} imply
\begin{equation}
\label{E:Hol-Const-2}
x_i=x_0-\ell \sum_{j=1}^i\cos\theta_j, \qquad y_i=y_0-\ell \sum_{j=1}^i\sin\theta_j, \qquad i=1, \dots, n.
\end{equation}
Differentiating the above and adding yields,
\begin{equation*}
\begin{split}
\sum_{i=1}^n(\dot x_i^2 + \dot y_i^2)&=n(\dot x_0^2+\dot y_0^2)+2\ell  \sum_{j=1}^n(n+1-j)\dot \theta_j 
(\dot y_0\cos\theta_j -\dot x_0\sin \theta_j)+\ell^2 \sum_{j=1}^n(n+1-j)\dot \theta_j^2 \\
& \qquad \qquad + 2\ell^2\sum_{k=1}^{n-1}\sum_{j=k+1}^n(n+1-j)\dot \theta_k\dot \theta_j \cos(\theta_k-\theta_j),
\end{split}
\end{equation*}
where we have used the  identity
\begin{equation*}
\sum_{i=1}^n\left [ \left (\sum_{j=1}^iT_j\cos\theta_j \right)^2+ \left (\sum_{j=1}^iT_j\sin\theta_j \right)^2 \right ]
=\sum_{j=1}^n(n+1-j)T_j^2+\sum_{k=1}^{n}\sum_{j=k+1}^n2(n+1-j)T_kT_j\cos(\theta_k-\theta_j),
\end{equation*}
that holds for arbitrary scalars $T_1, \dots, T_i$.\footnote{We use the convention that a sum over an empty range of
indices equals $0$.}

Therefore, the Lagrangian of the system $\Lag:TQ\to \R$ is given by
\begin{equation}
\label{E:Lag}
\begin{split}
\Lag&=\frac{1}{2}\left ( (J_0+Ma^2)\dot \theta^2 + (M+nm)(\dot x^2+\dot y^2)+2Ma\dot \theta (\dot y\cos\theta - \dot x\sin \theta) \right . \\ & \left . \qquad +2m\ell  \sum_{j=1}^n(n+1-j)\dot \theta_j 
(\dot y\cos\theta_j -\dot x \sin \theta_j)+ \sum_{j=1}^n(J+(n+1-j)m\ell^2) \dot \theta_j^2 \right . \\
& \left . \qquad+ 2m\ell^2\sum_{k=1}^{n}\sum_{j=k+1}^n(n+1-j)\dot \theta_k\dot \theta_j \cos(\theta_k-\theta_j) \right ),
\end{split}
\end{equation}
where we have introduced the simplified notation $x=x_0, y=y_0, \theta=\theta_0$.

Using again \eqref{E:Hol-Const-2}, we can write the nonholonomic constraints \eqref{E:NonHol-Const} in 
terms of the coordinates $(x, y, \theta, \theta_1, \dots , \theta_n)$ of $Q$ as
\begin{equation}
\label{E:Constraints}
\dot x \sin \theta_i -\dot y  \cos \theta_i+\ell \sum_{j=1}^i\cos(\theta_i-\theta_j)\dot \theta_j=0, \qquad i=0, \dots ,n.
\end{equation}

In principle, using  \eqref{E:Lag} and \eqref{E:Constraints}, one could write down the equations of motion for the system in terms of Lagrange multipliers using the Lagrange-D'Alembert principle (see e.g. \cite{Landau}). However this
approach does not make use of the symmetry of the problem that we discuss next.

\subsection{Symmetries.} \label{SS:Symmetry} The system possesses an $SE(2)$ symmetry associated to the arbitrariness of the origin and
orientation of the chosen cartesian frame. The action of the matrix
\begin{equation*}
g=\left ( \begin{array}{ccc} \cos \varphi & -\sin \varphi & r \\ \sin \varphi &  \cos \varphi& s \\ 0 & 0 &1 \end{array} \right )
\in SE(2)
\end{equation*}
on the configuration $(x, y , \theta , \theta_1, \dots , \theta_n)\in Q$ is given by
\begin{equation*}
g\cdot (x , y , \theta , \theta_1, \dots , \theta_n) = (x_0\cos \varphi -y_0\sin \varphi +r,x \sin \varphi +y \cos \varphi +s,   \theta +\varphi, \theta_1+\varphi,\dots , \theta_n+\varphi).
\end{equation*}

It is immediate to check that the Lagrangian \eqref{E:Lag} and the constraints \eqref{E:Constraints}
are invariant under the tangent lift of this action. It follows that the equations of motion drop to the quotient
$\mathcal{D}/SE(2)$ which is a rank two vector bundle over the $n$-torus $\mathbb{T}^n$.

We denote the angles between subsequent bodies in the convoy by
\begin{equation}
\label{E:Defalpha}
\alpha_1=\theta- \theta_1, \qquad \alpha_{i}=\theta_{i-1}-\theta_i, \qquad i=2, \dots, n,
\end{equation}
see Figure \ref{F:Main-Diag}.
The value of these angles is invariant under the $SE(2)$ action defined above and their values serve as coordinates
on the base $\mathbb{T}^n$ of the reduced space $\mathcal{D}/SE(2)$.

Next, we denote by $u$ the component of the linear velocity of the leading body $\mathcal{B}_0$ along its main axis,
and by $\omega$ its angular velocity. We have
\begin{equation*}
u=\dot x \cos \theta + \dot y \sin \theta, \qquad \omega= \dot \theta.
\end{equation*}
As it shall become clear below, the variables $u, \omega$ serve as linear coordinates on the fibers of the 
 reduced space $\mathcal{D}/SE(2)$. The reduced equations of motion form a set of $n+2$ nonlinear, coupled, first order  ordinary differential equations for $u, \omega, \alpha_1, \dots, \alpha_n$.

\section{The equations of motion}
\label{S:EqnsMain}

 The purpose of this section is to show the following.
\begin{theorem}
\label{T:MainRed}
 The reduced equations of motion of the $n$-trailer vehicle are given by
\begin{equation}
\label{E:MainRed}
\begin{split}
\dot u &=-\frac{1}{2R(\alpha)}\left ( \sum_{k=1}^nA_k \frac{\partial R}{\partial \alpha_k} \right ) u^2 +
 \frac{Q(\alpha)}{\ell^2R(\alpha)} u\omega + \frac{Ma}{R(\alpha)}\omega^2 , \\
 \dot \omega & = - \frac{Ma u \omega }{J_0+Ma^2}, \\
 \dot \alpha_1 &=\omega - \frac{u\sin \alpha_1}{\ell}, \\
\dot \alpha_k &= \frac{u}{\ell}\left ( \prod_{j=1}^{k-2}\cos \alpha_k \right ) \left ( \sin \alpha_{k-1}- \cos \alpha_{k-1}\sin \alpha_k \right ), \qquad k=2, \dots, n.
 \end{split}
\end{equation}
where the coefficients $A_k$ are defined by \eqref{E:DefAk} below and 
\begin{equation}
\begin{split}
\label{E:defQR}
Q(\alpha)&:=\cos \alpha_1 \sin \alpha_1 \left ( m\ell^2 \sum_{j=1}^n\left ( \prod_{k=2}^j\cos^2\alpha_k \right )
-J \prod_{k=2}^n \cos^2\alpha_k \right ), \\
R(\alpha)&:=M+m\left ( \sum_{j=1}^n\prod_{k=1}^j\cos^2\alpha_k \right ) + \frac{J}{\ell^2}\left (1 - \prod_{k=1}^n\cos^2\alpha_k \right ), 
\end{split}
\end{equation}
where we denote $\alpha=(\alpha_1, \dots, \alpha_n)$.\footnote{Note that $R(\alpha)>0$ for any value of $\alpha$.}
\end{theorem}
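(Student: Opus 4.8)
The plan is to split \eqref{E:MainRed} into three blocks and derive them in this order: the kinematic equations for $\dot\alpha_1,\dots,\dot\alpha_n$, the angular equation for $\dot\omega$, and finally the longitudinal equation for $\dot u$. First I would pin down the admissible velocities by solving the constraints \eqref{E:Constraints}. The constraint with $i=0$ forces the velocity of $\mathcal{B}_0$ to be longitudinal, so $\dot x=u\cos\theta$ and $\dot y=u\sin\theta$. Writing $u_i$ for the longitudinal speed of $\mathcal{B}_i$ and resolving the rigid-link relation between $(x_i,y_i)$ and $(x_{i-1},y_{i-1})$ along and across the axis of $\mathcal{B}_i$ gives the recursion $u_i=u_{i-1}\cos\alpha_i$ together with $\ell\dot\theta_i=u_{i-1}\sin\alpha_i$, whence $\dot\theta_i=\frac{u}{\ell}\sin\alpha_i\prod_{k=1}^{i-1}\cos\alpha_k$. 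Substituting this into $\dot\alpha_1=\omega-\dot\theta_1$ and $\dot\alpha_k=\dot\theta_{k-1}-\dot\theta_k$ produces the last two families of \eqref{E:MainRed} at once, and I would identify $A_k$ as the coefficient of $u$ in $\dot\alpha_k$ (so $A_1=-\sin\alpha_1/\ell$ and $A_k=\frac1\ell(\prod_{j=1}^{k-2}\cos\alpha_j)(\sin\alpha_{k-1}-\cos\alpha_{k-1}\sin\alpha_k)$ for $k\ge2$). These same relations exhibit the two $SE(2)$-invariant vector fields $X_u$ and $X_\omega=\partial_\theta$ spanning $\mathcal{D}$.

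For the dynamics I would contract the Lagrange--D'Alembert equations with fields lying in $\mathcal{D}$, so that the multipliers drop out. Contracting with $\partial_\theta$ is the cleanest case: none of the constraint one-forms in \eqref{E:Constraints} contains $d\theta$, so the $\theta$-equation is force-free, $\frac{d}{dt}\frac{\partial\Lag}{\partial\dot\theta}=\frac{\partial\Lag}{\partial\theta}$. Computing $\frac{\partial\Lag}{\partial\dot\theta}=(J_0+Ma^2)\dot\theta+Ma(\dot y\cos\theta-\dot x\sin\theta)$ and $\frac{\partial\Lag}{\partial\theta}=-Ma\dot\theta(\dot x\cos\theta+\dot y\sin\theta)$ from \eqref{E:Lag}, differentiating in time, and then imposing the constraint---where $\dot y\cos\theta-\dot x\sin\theta=0$ and, upon differentiating that identity, $\ddot y\cos\theta-\ddot x\sin\theta=u\omega$---collapses everything to the Chaplygin-sleigh relation $\dot\omega=-Mau\omega/(J_0+Ma^2)$, with all trailer contributions cancelling.

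The longitudinal equation is the one I expect to cost the most, and here I would exploit that $\Lag$ is purely kinetic, so the energy is conserved. Substituting the admissible velocities into the kinetic energy and using the telescoping identity $\sum_{i=1}^n\sin^2\alpha_i\prod_{k=1}^{i-1}\cos^2\alpha_k=1-\prod_{k=1}^n\cos^2\alpha_k$ gives the strikingly simple constrained Lagrangian $\Lag|_{\mathcal{D}}=\frac12\big((J_0+Ma^2)\omega^2+R(\alpha)u^2\big)$, with $R$ exactly as in \eqref{E:defQR} and manifestly positive. Differentiating $E=\frac12((J_0+Ma^2)\omega^2+Ru^2)$ in time, inserting the $\dot\omega$-equation and the kinematic relations $\dot\alpha_k=A_ku+\omega\,\delta_{k1}$, and solving the resulting scalar relation for $\dot u$ on $\{u\neq0\}$ yields $\dot u=\frac{Ma}{R}\omega^2-\frac{u}{2R}\sum_k\frac{\partial R}{\partial\alpha_k}\dot\alpha_k$. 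The $\omega\,\delta_{k1}$ part of $\dot\alpha_k$ produces the mixed term, and a direct computation of $\partial R/\partial\alpha_1$ shows that $-\frac{\ell^2}{2}\,\partial R/\partial\alpha_1=Q(\alpha)$, precisely the quantity in \eqref{E:defQR}, giving the $u\omega$ coefficient; the $A_ku$ part gives the $u^2$ term.

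The main obstacles are therefore (i) verifying the telescoping collapse that produces $R$, (ii) the explicit identification $\partial_{\alpha_1}R=-2Q/\ell^2$, and (iii) extending the energy-derived formula across $\{u=0\}$, where the energy balance degenerates and supplies no information on $\dot u$. I would settle this last point by smoothness of the reduced vector field, which determines it on the dense set $\{u\neq0\}$ and hence everywhere by continuity (giving $\dot u|_{u=0}=\frac{Ma}{R}\omega^2$). If a fully self-contained derivation is preferred over the energy shortcut, the alternative is to contract Lagrange--D'Alembert directly with $X_u$, compute the momenta $\partial\Lag/\partial\dot x,\partial\Lag/\partial\dot y,\partial\Lag/\partial\dot\theta_i$ from \eqref{E:Lag}, differentiate, and check that after imposing the constraints the expressions reorganize into the same $Q$, $R$ and $A_k$; this is the computationally heavy route that the energy argument is designed to bypass.
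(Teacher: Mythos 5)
Your proposal is correct, and the first two blocks (the kinematic equations for $\dot\alpha_k$ via the admissible-velocity recursion, and the $\dot\omega$ equation via the multiplier-free $\theta$-Lagrange equation) coincide with the paper's derivation, as does the computation of the constrained Lagrangian $\Lag_c=\tfrac12\bigl(R(\alpha)u^2+(J_0+Ma^2)\omega^2\bigr)$, which the paper obtains by the inductive Proposition~\ref{P:KineticE-ncar} and you obtain by the equivalent telescoping identity. Where you genuinely diverge is the $\dot u$ equation. The paper works in the quasi-velocity formalism of \cite{Grab}: it needs the structure coefficients $\mathcal{C}_{12}^1,\mathcal{C}_{12}^2$ of the projected bracket $\mathcal{P}([Z_1,Z_2])$, which it determines \emph{indirectly} in Lemma~\ref{L:Proj} by writing the $\omega$-equation in the form \eqref{E:MotionAppendix} and matching it against the already-known $\dot\omega=-Mau\omega/(J_0+Ma^2)$; it then feeds these coefficients into the $u$-component of \eqref{E:MotionAppendix}. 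You instead invoke conservation of energy, which for a purely kinetic Lagrangian with linear constraints follows a priori from the Lagrange--d'Alembert principle (the reaction forces annihilate admissible velocities), and observe that $\dot{\mathcal{E}}=0$ together with the known $\dot\omega$ and $\dot\alpha_k$ is exactly one scalar relation, linear in $\dot u$ with coefficient $R(\alpha)u$, hence determines $\dot u$ on $\{u\neq0\}$; the continuity extension across $\{u=0\}$ that you supply is needed and suffices, since the reduced vector field is smooth and $\{u\neq 0\}$ is dense. Both routes hinge on the same key identity $\partial R/\partial\alpha_1=-2Q/\ell^2$ (the paper's \eqref{E:aux2}), and both are ``indirect'' in spirit --- the paper back-solves for bracket coefficients, you back-solve for $\dot u$ itself. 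Your route buys a shorter argument that never touches $[Z_1,Z_2]$ or its orthogonal projection, at the price of the division-by-$u$ caveat and of having to justify energy conservation up front rather than verifying it afterwards, as the paper does in Section~3.1. One small presentational point: make the a priori energy-conservation step explicit (i.e., $\dot q^i\bigl(\tfrac{d}{dt}\partial\Lag/\partial\dot q^i-\partial\Lag/\partial q^i\bigr)=0$ because the reaction force lies in the annihilator of $\mathcal{D}$), since it is the load-bearing input replacing the paper's Lemma~\ref{L:Proj}.
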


The proof of this theorem follows the approach developed in \cite{Grab} to obtain the equations of motion  of  regular mechanical\footnote{By regular mechanical we mean that the Lagrangian is the kinetic energy minus the potential energy, where the kinetic energy defines a Riemannian metric on the configuration manifold,  and the constraint distribution 
has constant rank.}  nonholonomic system.

We begin by noting that the relations \eqref{E:Defalpha} imply
\begin{equation}
\label{E:theta-in-terms-of-alpha}
\theta_i=\theta-\sum_{j=1}^i\alpha_j, \qquad i=1,\dots, n.
\end{equation}
Using these expressions, we can write the 
nonholonomic constraints \eqref{E:Constraints} as
\begin{equation*}
\begin{split}
&\dot x \sin\theta - \dot y \cos \theta=0, \\
& \dot x \sin \left (\theta - \sum_{j=1}^i\alpha_j \right ) -\dot y \cos \left (\theta - \sum_{j=1}^i\alpha_j \right )
+ \ell \sum_{j=1}^i \cos \left ( \sum_{k=j+1}^{i}\alpha_k \right ) \left (\dot \theta - \sum_{l=1}^{j} \dot \alpha_l 
\right )=0, \quad i=1, \dots, n.
\end{split}
\end{equation*}

Use $(x,y,\theta, \alpha_1, \dots, \alpha_n)$ as coordinates on $Q$ and consider the vector fields on $Q$
\begin{equation}
\label{E:DefVF}
\begin{split}
Z_1=\cos \theta \frac{\partial}{\partial x} + \sin \theta  \frac{\partial}{\partial y}+\sum_{k=1}^n A_k\frac{\partial}{\partial \alpha_k}, \qquad
Z_2= \frac{\partial}{\partial \theta}+  \frac{\partial}{\partial \alpha_1},
\end{split}
\end{equation}
where
\begin{equation}
\label{E:DefAk}
\begin{split}
A_k&=\frac{1}{\ell}\left ( \prod_{j=1}^{k-2}\cos \alpha_j \right ) \left ( \sin \alpha_{k-1}- \cos \alpha_{k-1}\sin \alpha_k \right ),
\qquad k=1, \dots, n.
\end{split}
\end{equation}
In the above expression and in the sequel, we use the convention that the product over an empty range of indices equals $1$
and $\alpha_0=0$.

It is readily seen that $Z_1$ and $Z_2$ are linearly independent. 
%
Moreover, using the identities
\begin{eqnarray}
\label{E:sinId}
&&\sin\left ( \sum_{j=1}^i\alpha_j\right ) = \sum_{j=1}^i  \cos \left (\sum_{k=j+1}^i\alpha_k\right ) \left
( \prod_{s=1}^{j-1}\cos \alpha_s \right ) \sin \alpha_j,  \\
\label{E:SumProdbecomesSum}
&&\sum_{l=1}^j\left
( \prod_{s=1}^{l-2}\cos \alpha_s \right ) (\sin \alpha_{l-1}-\cos \alpha_{l-1} \sin \alpha_l ) =- \left (\prod_{s=1}^{j-1}\cos \alpha_s \right ) \sin \alpha_j,
\end{eqnarray}
one can check that $Z_1$ belongs to $\mathcal{D}$. It is easy to see that $Z_2$ is also a section
of $\mathcal{D}$. It follows that $\{Z_1, Z_2\}$ is a basis of sections of $\mathcal{D}$ and any tangent vector
 $v\in TQ$ belonging to $\mathcal{D}$ can be written as a linear combination
\begin{equation}
\label{E:sections-of-D}
v=uZ_1+\omega Z_2, \qquad u, \omega \in \R.
\end{equation}

The components of the above equation give
\begin{equation}
\label{E:Constraints-u-omega}
\begin{split}
\dot x&= u \cos \theta, \\
\dot y & = u \sin \theta, \\
\dot \theta & = \omega, \\
\dot \alpha_1 &=\omega - \frac{u\sin \alpha_1}{\ell}, \\
\dot \alpha_k &= \frac{u}{\ell}\left ( \prod_{j=1}^{k-2}\cos \alpha_j \right ) \left ( \sin \alpha_{k-1}- \cos \alpha_{k-1}\sin \alpha_k \right ), \qquad k=2, \dots, n.
\end{split}
\end{equation}

Equation \eqref{E:sections-of-D} shows that 
  $u$ and $\omega$ are linear coordinates on the fibers of $\mathcal{D}$. Moreover, the vector fields
$Z_1$ and $Z_2$ are invariant under the $SE(2)$ action defined in Section \ref{SS:Symmetry} and therefore 
they constitute a basis of sections of the reduced vector bundle $\mathcal{D}/SE(2)$. It follows that $u$ and
 $\omega$ can be interpreted as linear coordinates on the fibers of the vector bundle $\mathcal{D}/SE(2)$
  as advertised before.

Equations \eqref{E:Constraints-u-omega} are of pure kinematic nature and are well known to the control community (see e.g. \cite{Laumond1}).
They define the evolution of the variables $\alpha_1, \dots, \alpha_n$ in the reduced space and are consistent
with \eqref{E:MainRed}.

The evolution equation for $\omega$ is of dynamical nature and can be easily obtained by noting 
that the nonholonomic constraints 
as written in \eqref{E:Constraints} do not impose any restriction on the value of $\dot \theta$. Hence, the 
constraint reaction force written in the coordinates $(x, y, \theta, \theta_1, \dots, \theta_n)$ has no
component along the $\theta$-direction, and the following dynamical equation holds
\begin{equation*}
\frac{d}{dt}\left ( \frac{\partial \Lag}{\partial \dot \theta}\right ) -  \frac{\partial \Lag}{\partial \dot \theta}=0,
\end{equation*}
where $\Lag$ is given by  \eqref{E:Lag}. Explicitly we have
\begin{equation*}
(J_0+Ma^2)\ddot \theta +Ma\frac{d}{dt}( \dot y \cos\theta - \dot x \sin \theta) =-Ma\dot \theta( \dot x \cos \theta
+ \dot y \sin \theta).
\end{equation*}
Using \eqref{E:Constraints-u-omega} we obtain
\begin{equation}
\label{E:Motion-omega}
\dot \omega =- \frac{Ma u \omega }{J_0+Ma^2}
\end{equation}
as in \eqref{E:MainRed}.

The evolution equation for $u$ is more difficult to obtain. As mentioned above, we follow the approach taken
in \cite{Grab}.  This method to obtain the equations of motion of a nonholonomic system is outlined in the Appendix. 

The method requires us to compute the constrained Lagrangian $\Lag_c$ that is the restriction of $\Lag$ to $\mathcal{D}$. 
It is the kinetic energy of the system when the nonholonomic constraints are satisfied. In view of the symmetries,  
 its value can be expressed in
terms of $u, \omega, \alpha_1, \dots, \alpha_n$. To obtain an explicit expression for $\Lag_c$, start by noticing that
\eqref{E:theta-in-terms-of-alpha}, \eqref{E:SumProdbecomesSum} and \eqref{E:Constraints-u-omega} imply
\begin{equation}
\label{E:Constraints-u-omega-theta}
\dot x=u\cos \theta, \qquad \dot y = u\sin \theta, \qquad \dot \theta = \omega, \qquad
\dot \theta_k =\frac{u\sin \alpha_k}{\ell }\prod_{j=1}^{k-1}\cos \alpha_j, \quad k=1, \dots, n.
\end{equation}
Next we prove the following.

\begin{proposition}
\label{P:KineticE-ncar}
Let $j\geq 1$. If the constraints \eqref{E:Constraints-u-omega-theta} are satisfied, then we have
\begin{equation}
\label{E:KineticE-jcar}
\dot x_j^2 + \dot y_j^2 = u^2 \prod_{k=1}^j\cos^2\alpha_k .
\end{equation}
\end{proposition}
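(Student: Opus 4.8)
The plan is to prove the identity by induction on $j$, exploiting the kinematic meaning of its left-hand side. For each body $\mathcal{B}_j$ I would resolve the velocity of its reference point $(x_j,y_j)$ into components longitudinal and transverse to its own axis, setting $u_j:=\dot x_j\cos\theta_j+\dot y_j\sin\theta_j$ and $w_j:=\dot x_j\sin\theta_j-\dot y_j\cos\theta_j$, so that $\dot x_j^2+\dot y_j^2=u_j^2+w_j^2$. Since the velocity $v=uZ_1+\omega Z_2$ lies in $\mathcal{D}$, all of the nonholonomic constraints \eqref{E:NonHol-Const} hold, and these say precisely that $w_j=0$ for every $j$. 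Hence it suffices to establish that $u_j=u\prod_{k=1}^{j}\cos\alpha_k$ and then square. The base case is immediate: from \eqref{E:Constraints-u-omega-theta} one has $u_0=\dot x\cos\theta+\dot y\sin\theta=u$, which matches the empty product, so the case $j=0$ holds and serves as the anchor for the recursion.

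The heart of the argument is the recursion $u_j=u_{j-1}\cos\alpha_j$, which I would extract from the link constraint \eqref{E:Hol-Const} between $\mathcal{B}_j$ and $\mathcal{B}_{j-1}$. Differentiating $x_{j-1}=x_j+\ell\cos\theta_j$ and $y_{j-1}=y_j+\ell\sin\theta_j$ gives $\dot x_{j-1}=\dot x_j-\ell\sin\theta_j\,\dot\theta_j$ and $\dot y_{j-1}=\dot y_j+\ell\cos\theta_j\,\dot\theta_j$. Projecting the velocity of $\mathcal{B}_{j-1}$ onto the longitudinal direction $(\cos\theta_j,\sin\theta_j)$ of $\mathcal{B}_j$, the two terms carrying $\ell\dot\theta_j$ cancel, leaving $\dot x_{j-1}\cos\theta_j+\dot y_{j-1}\sin\theta_j=u_j$. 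On the other hand, resolving the same quantity in the frame of $\mathcal{B}_{j-1}$ and invoking $w_{j-1}=0$ produces $u_{j-1}\cos(\theta_{j-1}-\theta_j)=u_{j-1}\cos\alpha_j$, where I use $\theta_{j-1}-\theta_j=\alpha_j$ from \eqref{E:Defalpha} with the convention $\theta_0=\theta$ (so that the case $j=1$ is covered uniformly). Equating the two expressions yields the recursion, and the induction then delivers $u_j=u\prod_{k=1}^{j}\cos\alpha_k$, whence $\dot x_j^2+\dot y_j^2=u_j^2=u^2\prod_{k=1}^{j}\cos^2\alpha_k$.

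The step requiring the most care is the bookkeeping in the frame change for $\mathcal{B}_{j-1}$: one must be sure the transverse contribution $w_{j-1}\sin\alpha_j$ genuinely drops out, i.e.\ that the transverse velocity of the preceding body vanishes. I would make this watertight either by appealing directly to membership of $v$ in $\mathcal{D}$, as above, or by carrying $w_j=0$ as a second inductive hypothesis; in the latter case, projecting the link relation onto the transverse direction of $\mathcal{B}_j$ gives $w_j=\ell\dot\theta_j-u_{j-1}\sin\alpha_j$, which vanishes exactly because \eqref{E:Constraints-u-omega-theta} prescribes $\ell\dot\theta_j=u\sin\alpha_j\prod_{p=1}^{j-1}\cos\alpha_p=u_{j-1}\sin\alpha_j$. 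A purely computational alternative would differentiate \eqref{E:Hol-Const-2} directly, substitute the explicit $\dot\theta_m$, and collapse the resulting double sum using the identity \eqref{E:sinId}; this is viable but markedly more laborious than the recursive argument, so I would favour the induction.
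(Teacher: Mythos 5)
Your argument is correct, and it takes a genuinely different route from the paper's. Both proofs are inductions on $j$ that start from the differentiated link relation \eqref{E:Hol-Const}, but the key step differs. The paper works directly with the squared speed: it expands $\dot x_j^2+\dot y_j^2$ in terms of $\dot x_{j-1}^2+\dot y_{j-1}^2$, and the whole effort goes into showing that the cross term satisfies $\dot x_{j-1}\sin\theta_j-\dot y_{j-1}\cos\theta_j=-\ell\dot\theta_j$, which requires substituting the explicit formulas for $\dot x,\dot y,\dot\theta_k$ and invoking the telescoping identity \eqref{E:sinId}; the recursion obtained is $\dot x_j^2+\dot y_j^2=\dot x_{j-1}^2+\dot y_{j-1}^2-\ell^2\dot\theta_j^2$. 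You instead decompose each body's velocity into longitudinal and transverse parts $u_j,w_j$, observe that the nonholonomic constraints \eqref{E:NonHol-Const} are exactly $w_j=0$, and reduce everything to the scalar recursion $u_j=u_{j-1}\cos\alpha_j$, which follows from a plain change of frame once the $\ell\dot\theta_j$ terms cancel. This buys you three things: you avoid the identity \eqref{E:sinId} altogether; you get the signed refinement $u_j=u\prod_{k=1}^{j}\cos\alpha_k$ rather than only its square (which also makes the relation $\ell\dot\theta_j=u_{j-1}\sin\alpha_j$ in \eqref{E:Constraints-u-omega-theta} transparent); and your base case $u_0=u$ is immediate, whereas the paper leaves the $j=1$ computation to the reader. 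The one point you rightly flag — that the transverse contribution $w_{j-1}\sin\alpha_j$ must vanish — is handled correctly either way you propose: appealing to membership in $\mathcal{D}$ is legitimate since the paper has already shown $Z_1,Z_2$ span $\mathcal{D}$, and your self-contained alternative of carrying $w_j=0$ as a second inductive hypothesis closes the loop without that appeal. The paper's more computational proof has the modest advantage of never introducing auxiliary quantities, but your version is shorter and kinematically more illuminating.
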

\begin{proof}
By induction. The case $j=1$ is a simple calculation using \eqref{E:Hol-Const-2} and 
\eqref{E:Constraints-u-omega-theta} and is left to the reader.
Assume that the result is valid for $j-1\geq 1$. Using  \eqref{E:Hol-Const} we write
\begin{equation*}
\dot x_j = \dot x_{j-1} + \ell \dot \theta_j \sin \theta_j, \qquad \dot y_j = \dot y_{j-1} - \ell \dot \theta_j \cos \theta_j.
\end{equation*}
Hence,
\begin{equation}
\label{E:Proof-prop1}
\dot x_j^2 + \dot y_j^2=\dot x_{j-1}^2 + \dot y_{j-1}^2 + 2\ell \dot \theta_j (\dot x_{j-1} \sin \theta_j - \dot y_{j-1} \cos \theta_j)
+ \ell^2 \dot \theta_j^2.
\end{equation}
Using \eqref{E:Hol-Const-2}  we write
\begin{equation*}
\dot x_{j-1}= \dot x + \ell \sum_{k=1}^{j-1} \sin \theta_k \dot \theta_k, \qquad 
\dot y_{j-1}= \dot y - \ell \sum_{k=1}^{j-1} \cos \theta_k \dot \theta_k,
\end{equation*}
so that 
\begin{equation*}
\dot x_{j-1} \sin \theta_j - \dot y_{j-1} \cos \theta_j=\dot x \sin \theta_j - \dot y \cos \theta_j+ \ell  \sum_{k=1}^{j-1}
\cos (\theta_j-\theta_k) \dot \theta_k.
\end{equation*}
Now, in view of \eqref{E:Constraints-u-omega-theta} and \eqref{E:theta-in-terms-of-alpha} we can write
\begin{equation*}
\begin{split}
& \dot x \sin \theta_j - \dot y \cos \theta_j=-u\sin \left ( \sum_{k=1}^j \alpha_k \right ) , \\
&\ell  \sum_{k=1}^{j-1}
\cos (\theta_j-\theta_k) \dot \theta_k =- \ell \dot \theta_j + u \sum_{k=1}^{j}\cos \left ( \sum_{l=k+1}^j \alpha_l
\right  )\sin \alpha_k  \left (\prod_{s=1}^{k-1} \cos \alpha_s  \right ).
\end{split}
\end{equation*}
Using the identity \eqref{E:sinId} we conclude that 
\begin{equation*}
\dot x_{j-1} \sin \theta_j - \dot y_{j-1} \cos \theta_j=- \ell \dot \theta_j.
\end{equation*}
Therefore, \eqref{E:Proof-prop1} becomes 
\begin{equation*}
\dot x_j^2 + \dot y_j^2=\dot x_{j-1}^2 + \dot y_{j-1}^2 - \ell^2 \dot \theta_j^2.
\end{equation*}
Using the induction hypothesis and \eqref{E:Constraints-u-omega-theta} once more, this becomes
\begin{equation*}
\dot x_j^2 + \dot y_j^2=u^2\prod_{k=1}^{j-1}\cos^2\alpha_k - u^2 \sin^2 \alpha_j \prod_{k=1}^{j-1}\cos^2\alpha_k
\end{equation*}
that is equivalent to \eqref{E:KineticE-jcar}.


\end{proof}

It follows immediately from the above proposition, and from \eqref{E:Constraints-u-omega-theta}, that, if the nonholonomic constraints are satisfied, 
 the kinetic energy $\mathcal{K}_j$ of the $j^{th}$ trailer
$\mathcal{B}_j$ equals
\begin{equation*}
\begin{split}
\mathcal{K}_j&=\frac{1}{2}\left ( J\dot \theta_j^2 + m (\dot x_j^2 + \dot y_j^2) \right ) \\ &= \frac{u^2}{2}\left ( \prod_{k=1}^{j-1}
\cos^2 \alpha_k \right ) \left ( \frac{J}{\ell^2} \sin^2 \alpha_j + m \cos^2   \alpha_j  \right ),
\end{split}
\end{equation*}
for $j=1, \dots, n$.  For $j=0$ we have
\begin{equation*}
\begin{split}
\mathcal{K}_0&=\frac{1}{2}\left ( J_0\dot \theta ^2 + m (\dot x_C^2 + \dot y_C^2) \right ) \\ &= \frac{1}{2}\left ( 
(J_0+Ma^2) \omega^2 + Mu^2 \right ).
\end{split}
\end{equation*}

Therefore, adding up the contributions of all the cars in the convoy, we conclude that the constrained Lagrangian
is given by
\begin{equation}
\label{E:Const-Lag}
\Lag_c=\frac{1}{2} \left ( R(\alpha) u^2 + (J_0+Ma^2)\omega^2 \right ).
\end{equation}

%
%

Next we prove the following.
\begin{lemma}
\label{L:Proj}
 The orthogonal projection of the commutator $[Z_1,Z_2]$ onto $\mathcal{D}$ with respect to the kinetic energy metric defined
by the Lagrangian \eqref{E:Lag} is given by
\begin{equation*}
\mathcal{C}_{12}^1Z_1+\mathcal{C}_{12}^2Z_2
\end{equation*}
where
\begin{equation*}
\mathcal{C}_{12}^1=\frac{Q(\alpha)}{\ell^2R(\alpha)}, \qquad \mathcal{C}_{12}^2=-\frac{Ma}{J_0+Ma^2}.
\end{equation*}
Here $Q(\alpha)$ and $R(\alpha)$ are defined by \eqref{E:defQR}.
\end{lemma}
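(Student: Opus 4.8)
Let $\mathbb{G}$ denote the kinetic-energy Riemannian metric on $Q$ determined by the Lagrangian \eqref{E:Lag}. The plan is to reduce the statement to the computation of the two scalar products $\mathbb{G}([Z_1,Z_2],Z_1)$ and $\mathbb{G}([Z_1,Z_2],Z_2)$, and then to evaluate these by splitting $\mathbb{G}$ into its per-body contributions. First I record the Gram matrix of the basis $\{Z_1,Z_2\}$. Writing a generic element of $\mathcal{D}$ as $uZ_1+\omega Z_2$, the constrained Lagrangian \eqref{E:Const-Lag} gives $\tfrac12\mathbb{G}(uZ_1+\omega Z_2,uZ_1+\omega Z_2)=\Lag_c=\tfrac12(R(\alpha)u^2+(J_0+Ma^2)\omega^2)$; polarizing in $(u,\omega)$ yields $\mathbb{G}(Z_1,Z_1)=R(\alpha)$, $\mathbb{G}(Z_2,Z_2)=J_0+Ma^2$ and, crucially, $\mathbb{G}(Z_1,Z_2)=0$. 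Since the Gram matrix is diagonal, the orthogonal projection of any $W$ onto $\mathcal{D}=\spn\{Z_1,Z_2\}$ is $\frac{\mathbb{G}(W,Z_1)}{R(\alpha)}Z_1+\frac{\mathbb{G}(W,Z_2)}{J_0+Ma^2}Z_2$. Comparing with $\mathcal{C}_{12}^1=Q(\alpha)/(\ell^2R(\alpha))$ and $\mathcal{C}_{12}^2=-Ma/(J_0+Ma^2)$, it therefore suffices to prove
\[
\mathbb{G}([Z_1,Z_2],Z_1)=\frac{Q(\alpha)}{\ell^2}, \qquad \mathbb{G}([Z_1,Z_2],Z_2)=-Ma .
\]

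Next I compute the bracket. Because the components of $Z_2=\partial_\theta+\partial_{\alpha_1}$ are constant, $[Z_1,Z_2]=-\sum_a Z_2(Z_1^a)\,\partial_a$; using \eqref{E:DefVF}–\eqref{E:DefAk} and that the $A_k$ are independent of $\theta$, this gives
\[
[Z_1,Z_2]=\sin\theta\,\frac{\partial}{\partial x}-\cos\theta\,\frac{\partial}{\partial y}-\sum_{k=1}^n\Big(\frac{\partial A_k}{\partial\alpha_1}\Big)\frac{\partial}{\partial\alpha_k}.
\]
To evaluate the scalar products I split $\mathbb{G}$ along the bodies: for vectors $V,W$,
\[
\mathbb{G}(V,W)=J_0V^\theta W^\theta+M\,(V_C\!\cdot\! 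W_C)+\sum_{j=1}^n\big(J\,V^{\theta_j}W^{\theta_j}+m\,(V_j\!\cdot\! W_j)\big),
\]
where $V_C,V_j$ are the induced planar velocities of the centre of mass of $\mathcal{B}_0$ and of the axle of $\mathcal{B}_j$, obtained by differentiating \eqref{E:CofM} and \eqref{E:Hol-Const-2}, and $V^{\theta_j}=V^\theta-\sum_{l=1}^jV^{\alpha_l}$ comes from \eqref{E:theta-in-terms-of-alpha}.

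The identity $\mathbb{G}([Z_1,Z_2],Z_2)=-Ma$ is short: a direct check shows that $Z_2$ generates a pure rotation of the leading car about its axle, so $Z_2^{\theta_j}=0$ and $(Z_2)_j=0$ for every trailer, leaving only the $\mathcal{B}_0$ term, which evaluates to $-Ma$. The identity $\mathbb{G}([Z_1,Z_2],Z_1)=Q(\alpha)/\ell^2$ is the main obstacle. Here the leading-car terms vanish (the $J_0V^\theta W^\theta$ term because both vector fields have zero $\partial_\theta$–component, and the $M$ term because the induced centre-of-mass velocities $(Z_1)_C=(\cos\theta,\sin\theta)$ and $([Z_1,Z_2])_C=(\sin\theta,-\cos\theta)$ are orthogonal), so only the sum over $\mathcal{B}_1,\dots,\mathcal{B}_n$ survives. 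For the rotational contribution one uses \eqref{E:SumProdbecomesSum} to get $Z_1^{\theta_j}=\frac{\sin\alpha_j}{\ell}\prod_{s=1}^{j-1}\cos\alpha_s$, whence $[Z_1,Z_2]^{\theta_j}=-\partial_{\alpha_1}Z_1^{\theta_j}$; summing $J\,[Z_1,Z_2]^{\theta_j}Z_1^{\theta_j}$ over $j$ and collapsing with the telescoping identity $1-\sum_{j=2}^n\sin^2\alpha_j\prod_{s=2}^{j-1}\cos^2\alpha_s=\prod_{k=2}^n\cos^2\alpha_k$ produces exactly the $J$–term of $Q(\alpha)/\ell^2$.

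The genuinely delicate part is the translational trailer contribution. By Proposition \ref{P:KineticE-ncar} the magnitude of the induced velocity of $\mathcal{B}_j$ under $Z_1$ is $\prod_{k=1}^j\cos\alpha_k$, and by the nonholonomic constraint \eqref{E:NonHol-Const} it points along the axle direction, so a sign check (via $u_j=u_{j-1}\cos\alpha_j$) gives $(Z_1)_j=\big(\prod_{k=1}^j\cos\alpha_k\big)(\cos\theta_j,\sin\theta_j)$. Hence the $m$–term of body $j$ equals $m\prod_{k=1}^j\cos\alpha_k$ times the axial component $P_j:=[Z_1,Z_2]^{x_j}\cos\theta_j+[Z_1,Z_2]^{y_j}\sin\theta_j$ of the velocity of $\mathcal{B}_j$ under $[Z_1,Z_2]$. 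Expanding $P_j$ and using that $\theta_k-\theta_j=\sum_{l=k+1}^j\alpha_l$ shows that all explicit dependence on $\theta$ cancels, leaving a purely $\alpha$–dependent trigonometric sum; invoking the sine identity \eqref{E:sinId} should then collapse it to the closed form $P_j=\sin\alpha_1\prod_{k=2}^j\cos\alpha_k$. Substituting and summing over $j$ gives precisely the $m$–term of $Q(\alpha)/\ell^2$, which together with the $J$–term completes the proof. I expect this final collapse of $P_j$ to be the step demanding the most care, since it is where the constraint kinematics and the two trigonometric identities must conspire to remove every frame-dependent quantity and leave only the $\alpha$–dependent expression $Q(\alpha)$.
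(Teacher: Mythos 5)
Your proof is correct, but it follows a genuinely different route from the paper's. The paper gives an \emph{indirect} argument: it takes the general quasi-velocity equation \eqref{E:MotionAppendix} with $b=2$, compares it with the evolution equation \eqref{E:Motion-omega} for $\omega$ (which was obtained independently from the observation that the constraints impose no restriction on $\dot\theta$, so the reaction force has no $\theta$-component), and reads off $\mathcal{C}_{12}^1$ and $\mathcal{C}_{12}^2$ from the requirement that the two agree, the only computation needed being the single identity $\partial R/\partial\alpha_1=-2Q(\alpha)/\ell^2$. You instead prove the lemma \emph{directly}: you read the Gram matrix of $\{Z_1,Z_2\}$ off the constrained Lagrangian \eqref{E:Const-Lag} (the absence of a $u\omega$ cross term giving the useful fact $\mathbb{G}(Z_1,Z_2)=0$), compute $[Z_1,Z_2]$ explicitly, and evaluate the two inner products by decomposing the metric body by body. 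I checked your key intermediate claims: $\mathbb{G}([Z_1,Z_2],Z_2)=-Ma$ does follow from $Z_2$ being a pure rotation of $\mathcal{B}_0$ about its axle midpoint, the rotational trailer sum does telescope to the $J$-term of $Q/\ell^2$, and the axial components do satisfy $P_j=\sin\alpha_1\prod_{k=2}^{j}\cos\alpha_k$ (verified for $j=1,2$; the general case follows as you indicate from $\theta_k-\theta_j=\sum_{l=k+1}^{j}\alpha_l$ and \eqref{E:sinId}, though you state rather than fully execute this step, so a referee would want the induction written out). The trade-off is clear: the paper's argument is shorter and leans on machinery already in place (the appendix formula and the $\dot\omega$ equation), whereas yours is self-contained, verifies the projection from first principles rather than by consistency, and as a by-product gives an independent check of the $\dot\omega$ equation instead of assuming it.
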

\begin{proof}
We give an indirect proof. In view of the discussion in the Appendix, the evolution equation for $\omega$ 
can be obtained from the general formula \eqref{E:MotionAppendix} with the subindex $b=2$ (for us $v^1=u$, $v^2=\omega$).
Since $\Lag_c$ is independent of $x,y,\theta$ and the vector field $Z_2$ is given by \eqref{E:DefVF}, 
we obtain
\begin{equation*}
\frac{d}{dt}\left ( \frac{\partial \Lag_c}{\partial \omega }\right  ) = \mathcal{C}_{12}^1u  \frac{\partial \Lag_c}{\partial u} + \mathcal{C}_{12}^2u  \frac{\partial \Lag_c}{\partial \omega} +
  \frac{\partial \Lag_c}{\partial \alpha_1},
\end{equation*}
where we have used $\mathcal{C}_{12}^e=-\mathcal{C}_{21}^e, \; e=1,2$. 

Using the expression \eqref{E:Const-Lag} for $\Lag_c$, the last equation becomes 
\begin{equation*}
(J_0+Ma^2)\dot \omega= \mathcal{C}_{12}^1R(\alpha)u^2 +  \mathcal{C}_{12}^2(J_0+Ma^2) u\omega+\frac{1}{2}
 \frac{\partial R}{\partial \alpha_1}(\alpha)u^2.
\end{equation*}
The above equation should simplify to \eqref{E:Motion-omega} so we conclude that
\begin{equation*}
 \mathcal{C}_{12}^1R(\alpha)+ \frac{1}{2}
 \frac{\partial R}{\partial \alpha_1}(\alpha)=0, \qquad  \mathcal{C}_{12}^2=-\frac{Ma}{J_0+Ma^2}.
\end{equation*}
The proof is completed by noticing that
\begin{equation}
\label{E:aux2}
\frac{\partial R}{\partial \alpha_1}(\alpha) =-\frac{2Q(\alpha)}{\ell^2}.
\end{equation}
\end{proof}

The equation for $u$ can now be obtained from the general formula \eqref{E:MotionAppendix} with the subindex $a=1$.
Since $\Lag_c$ is independent of $x,y,\theta$ and the vector field $Z_1$ is given by \eqref{E:DefVF}, 
we obtain
\begin{equation*}
\frac{d}{dt}\left ( \frac{\partial \Lag_c}{\partial u}\right  ) = -\mathcal{C}_{12}^1\omega  \frac{\partial \Lag_c}{\partial u} - 
\mathcal{C}_{12}^2\omega  \frac{\partial \Lag_c}{\partial \omega} +
 \sum_{k=1}^nA_k \frac{\partial \Lag_c}{\partial \alpha_k},
\end{equation*}
that becomes
\begin{equation}
\label{E:dotpu1}
\frac{d}{dt} ( R(\alpha) u) = -\frac{Q(\alpha)}{\ell^2}u\omega +Ma\omega^2   +\frac{1}{2}
 \sum_{k=1}^nA_k \frac{\partial R}{\partial \alpha_k} u^2.
\end{equation}

On the other hand,
\begin{equation}
\label{E:dotpu2}
\frac{d}{dt} ( R(\alpha) u)=u\sum_{k=1}^n \frac{\partial R}{\partial \alpha_k}\dot \alpha_k +R(\alpha)\dot u.
\end{equation}
Using that
\begin{equation}
\label{E:aux1}
\dot \alpha_1=\omega +u A_1 , \qquad \dot \alpha_k=uA_k, \quad k=2,\dots, n,
\end{equation}
we can combine \eqref{E:dotpu1} and \eqref{E:dotpu2} to give
\begin{equation}
\label{E:dotpu3}
R(\alpha)\dot u = -\frac{1}{2}\left (\sum_{k=1}^nA_k \frac{\partial R}{\partial \alpha_k} \right ) u^2 - \frac{\partial R}{\partial \alpha_1}u\omega - \frac{Q(\alpha)u\omega}{\ell^2}+Ma\omega^2.
\end{equation}
Using  \eqref{E:aux2} one shows that 
 equation \eqref{E:dotpu3} can be written as
\begin{equation}
\label{E:Equ}
\dot u =-\frac{1}{2R(\alpha)}\left ( \sum_{k=1}^nA_k \frac{\partial R}{\partial \alpha_k} \right ) u^2 +
 \frac{Q(\alpha)}{\ell^2R(\alpha)} u\omega + \frac{Ma}{R(\alpha)}\omega^2,
\end{equation}
that completes the proof of Theorem \ref{T:MainRed}.

\subsection{Energy conservation and the flow on the energy level surfaces}

We note that, as it is usual with nonholonomic systems, the energy is preserved. In our case, this is the reduced kinetic energy
 given by the constrained Lagrangian \eqref{E:Const-Lag}. If we define
\begin{equation}
\label{E:Energy}
\mathcal{E}(\alpha,u,\omega)= \frac{1}{2} \left ( R(\alpha) u^2 + (J_0+Ma^2)\omega^2 \right ),
\end{equation}
then a direct calculation that uses \eqref{E:aux1} and \eqref{E:aux2} shows that $\mathcal{E}$ is preserved by the flow of
equations \eqref{E:MainRed}.

Let $E>0$. It is natural to parametrize the level set $\mathcal{E}=E$ with the angles $\beta, \alpha_1, \dots, \alpha_n$ where
the angle $\beta$ is uniquely determined by the conditions
\begin{equation}
\label{E:Param-torus}
u=\sqrt{\frac{2E}{R(\alpha)}}\cos \beta, \qquad \omega=\sqrt{\frac{2E}{J_0+Ma^2}}\sin \beta.
\end{equation}
It follows that the energy level set $\mathcal{E}=E$ is diffeomorphic to the $(n+1)$-torus $\mathbb{T}^{n+1}$.
To obtain an evolution equation for $\beta$ we differentiate the above relation for $\omega$ with respect to time to obtain
\begin{equation*}
\dot \omega = \sqrt{\frac{2E}{J_0+Ma^2}}\dot \beta \cos \beta.
\end{equation*}
Now, combining \eqref{E:Motion-omega} with \eqref{E:Param-torus} and the above equation we obtain
\begin{equation*}
\sqrt{\frac{2E}{J_0+Ma^2}}\dot \beta \cos \beta=  - \frac{Ma }{J_0+Ma^2}\left ( \frac{2E}{\sqrt{R(\alpha)(J_0+Ma^2)}} \right )\cos \beta \sin \beta
\end{equation*}
which simplifies to
\begin{equation}
\label{E:betaEq}
\dot \beta =  - \frac{Ma }{J_0+Ma^2}\sqrt{\frac{2E}{R(\alpha)}}\sin \beta,
\end{equation}
assuming that $\cos \beta \neq 0$. Proceeding in an analogous fashion, differentiating the relation 
for $u$ in \eqref{E:Param-torus}
with respect to time and using  \eqref{E:Equ} we obtain   \eqref{E:betaEq}  provided that  $\sin \beta \neq 0$. In conclusion,
equation \eqref{E:betaEq} holds for any value of $\beta$.

The rest  of the equations for the flow restricted to the energy surface are obtained by combining \eqref{E:Param-torus}
with \eqref{E:MainRed}. We obtain
\begin{equation}
\label{E:alphaEqonTorus}
\begin{split}
\dot \alpha_1 &= \sqrt{\frac{2E}{J_0+Ma^2}}\sin \beta- \frac{\sqrt{2E}\sin\alpha_1}{\ell \sqrt{R(\alpha)}}\cos \beta, \\
\dot \alpha_k &=\frac{1}{\ell} \sqrt{\frac{2E}{R(\alpha)}}\left ( \prod_{j=1}^{k-2}\cos \alpha_k \right ) \left ( \sin \alpha_{k-1}- \cos \alpha_{k-1}\sin \alpha_k \right )\cos \beta, \qquad k=2, \dots, n.
\end{split}
\end{equation}

We summarize the results of this subsection in the following.
\begin{theorem}
The positive energy level sets of the reduced system \eqref{E:MainRed} are diffeomorphic to $(n+1)$-tori 
that can be parametrized with the angular variables $(\beta, \alpha_1, \dots, \alpha_n)$. The restriction of the
flow to the torus $\mathcal{E}=E>0$ is described by equations \eqref{E:betaEq} and \eqref{E:alphaEqonTorus}.
\end{theorem}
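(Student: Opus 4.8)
The plan is to read this statement as the structured summary of the computations already assembled in this subsection, so that the proof reduces to putting three ingredients together in the right order. First I would confirm that each positive level set is a genuine invariant submanifold. Differentiating $\mathcal{E}$ from \eqref{E:Energy} along the flow \eqref{E:MainRed}, substituting the expressions \eqref{E:aux1} for $\dot\alpha_1,\dots,\dot\alpha_n$ and invoking the identity \eqref{E:aux2}, one checks that the $u^2\dot{R}$, $u\omega$ and $\omega^2$ contributions cancel, so $\mathcal{E}$ is a first integral. Since $R(\alpha)>0$ for all $\alpha$ (recorded after \eqref{E:defQR}) and $J_0+Ma^2>0$, the energy is a fiberwise positive-definite quadratic form on the rank-two bundle $\mathcal{D}/SE(2)$; in particular $d\mathcal{E}$ cannot vanish on $\{\mathcal{E}=E\}$ for $E>0$ (otherwise $u=\omega=0$ and $\mathcal{E}=0$), so each such $E$ is a regular value and the level set is a smooth compact hypersurface.

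Second I would establish the diffeomorphism with $\mathbb{T}^{n+1}$ by exhibiting the global parametrization \eqref{E:Param-torus} explicitly. For each fixed $\alpha\in\mathbb{T}^n$ the equation $\tfrac12\bigl(R(\alpha)u^2+(J_0+Ma^2)\omega^2\bigr)=E$ cuts out an ellipse in the fiber $(u,\omega)$-plane, and \eqref{E:Param-torus} is exactly the standard angular parametrization of that ellipse by $\beta\in\mathbb{T}^1$. Because $R(\alpha)$ is smooth and bounded away from zero, the coefficients $\sqrt{2E/R(\alpha)}$ and $\sqrt{2E/(J_0+Ma^2)}$ depend smoothly on $\alpha$, so the map $(\beta,\alpha_1,\dots,\alpha_n)\mapsto(u,\omega,\alpha_1,\dots,\alpha_n)$ is a smooth bijection from $\mathbb{T}^{n+1}$ onto $\{\mathcal{E}=E\}$ whose inverse recovers $\beta$ smoothly from $(u,\omega)$. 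This simultaneously trivializes the circle bundle over $\mathbb{T}^n$ and yields the claimed diffeomorphism onto $\mathbb{T}^{n+1}$.

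Third I would derive the restricted flow. Differentiating the $\omega$-relation in \eqref{E:Param-torus} and substituting \eqref{E:Motion-omega} together with the parametrization gives \eqref{E:betaEq} wherever $\cos\beta\neq0$; repeating the computation with the $u$-relation and \eqref{E:Equ} gives the identical equation wherever $\sin\beta\neq0$. Since $\{\cos\beta\neq0\}$ and $\{\sin\beta\neq0\}$ cover $\mathbb{T}^1$, the single equation \eqref{E:betaEq} holds for every $\beta$. Substituting \eqref{E:Param-torus} directly into the $\dot\alpha_1,\dots,\dot\alpha_n$ components of \eqref{E:MainRed} then produces \eqref{E:alphaEqonTorus}, completing the description of the flow.

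Each of these steps is a routine verification, so the only genuine subtlety — and the point I would treat most carefully — is the gluing argument for $\dot\beta$: the two derivations each omit a finite set of values of $\beta$, and to conclude that the common expression extends across $\beta\in\{0,\pi/2,\pi,3\pi/2\}$ one must invoke the smoothness of the flow, equivalently the continuity of $\dot\beta$ as a vector field on $\mathbb{T}^{n+1}$. Everything else rests on the positivity of $R(\alpha)$, which guarantees both that the level set is a smooth sphere bundle and that the parametrization \eqref{E:Param-torus} is a bona fide diffeomorphism.
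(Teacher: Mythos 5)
Your proposal is correct and follows essentially the same route as the paper: energy conservation via \eqref{E:aux1} and \eqref{E:aux2}, the torus structure via the parametrization \eqref{E:Param-torus}, and the derivation of \eqref{E:betaEq} separately on $\{\cos\beta\neq 0\}$ and $\{\sin\beta\neq 0\}$, which together cover the circle. The extra observations you add (that $E>0$ is a regular value, and that the fibers are ellipses) are sound refinements of the same argument; note only that your closing worry about gluing is already resolved by the covering argument itself, since no value of $\beta$ annihilates both $\cos\beta$ and $\sin\beta$.
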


\section{Classification and linear stability of equilibria}
\label{S:dynagret0}

We study the  equilibria of the reduced system restricted to a positive energy level set. 
Throughout this section we  assume that the constant $a>0$. 

\subsection{Classification of equilibria}
\begin{proposition}
\label{P:Classification-Equil}
Let $E>0$. There exist exactly $2^{n+1}$ equlibrium points in the energy level set $\mathcal{E}=E$
of the reduced system \eqref{E:MainRed}. They are given by the conditions
\begin{equation}
\label{E:EquilConds}
u=\pm\sqrt{\frac{2E}{M+nm}},\qquad \omega=0, \qquad \sin \alpha_k=0, \quad k=1, \dots, n.
\end{equation}
\end{proposition}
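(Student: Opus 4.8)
The plan is to characterize the equilibria of \eqref{E:MainRed} as the common zeros of the right-hand sides of all $n+2$ equations, intersected with the level set $\mathcal{E}=E$ determined by \eqref{E:Energy}, and then to count the solutions. Since $a>0$ and $J_0+Ma^2>0$, the condition $\dot\omega=0$ coming from the second equation of \eqref{E:MainRed} is equivalent to $u\omega=0$, so at any equilibrium either $u=0$ or $\omega=0$. I would first dispose of the case $u=0$: if $u=0$, then \eqref{E:Energy} reduces to $\mathcal{E}=\tfrac12(J_0+Ma^2)\omega^2=E>0$, which forces $\omega\neq0$; but then the third equation of \eqref{E:MainRed} gives $\dot\alpha_1=\omega\neq0$, contradicting equilibrium. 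Hence every equilibrium must satisfy $\omega=0$, and then $\mathcal{E}=E$ together with $R(\alpha)>0$ forces $u=\pm\sqrt{2E/R(\alpha)}\neq0$.

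With $\omega=0$ and $u\neq0$ established, I would extract the remaining conditions from the kinematic relations \eqref{E:aux1}. The equation $\dot\alpha_1=0$ becomes $uA_1=0$, and for $k=2,\dots,n$ the equations $\dot\alpha_k=uA_k=0$; since $u\neq0$, these collapse to $A_k=0$ for all $k=1,\dots,n$, where $A_k$ is given by \eqref{E:DefAk}. In particular, recalling the conventions $\alpha_0=0$ and empty product $=1$, one has $A_1=-\sin\alpha_1/\ell$, so $A_1=0$ is just $\sin\alpha_1=0$.

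The crux is to show that $A_k=0$ for all $k$ forces $\sin\alpha_k=0$ for all $k$, and I would argue this by induction on $k$. The base case $\sin\alpha_1=0$ is immediate. Assuming $\sin\alpha_1=\dots=\sin\alpha_{k-1}=0$, each $\cos\alpha_j=\pm1$ for $j\le k-1$, so the prefactor $\prod_{j=1}^{k-2}\cos\alpha_j=\pm1\neq0$ and $\sin\alpha_{k-1}=0$; the vanishing of $A_k$ then reduces to $\mp\cos\alpha_{k-1}\sin\alpha_k/\ell=0$, and since $\cos\alpha_{k-1}=\pm1\neq0$ we obtain $\sin\alpha_k=0$, completing the induction. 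This is the only genuinely delicate step, and the induction is precisely what makes it work; the converse is immediate, since if $\sin\alpha_k=0$ for all $k$ then every factor of each $A_k$ vanishes. I would then confirm that such points are indeed equilibria by checking $\dot u=0$ in the first equation of \eqref{E:MainRed}: the $u\omega$ and $\omega^2$ terms vanish because $\omega=0$, while $\sum_{k=1}^n A_k\,\partial R/\partial\alpha_k=0$ because every $A_k=0$.

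Finally I would count. Evaluating $R(\alpha)$ from \eqref{E:defQR} at $\cos^2\alpha_k=1$ gives $R(\alpha)=M+nm$, since each $\prod_{k=1}^j\cos^2\alpha_k=1$ contributes $n$ to the middle sum and the factor $1-\prod_{k=1}^n\cos^2\alpha_k$ vanishes; hence $u=\pm\sqrt{2E/(M+nm)}$, in agreement with \eqref{E:EquilConds}. Each angle $\alpha_k$ with $\sin\alpha_k=0$ takes one of the two distinct values $0,\pi$ on the circle, giving $2^n$ choices, and the sign of $u$ contributes a further factor of $2$. Since distinct choices of $(\,\text{sign of }u,\ \alpha_1,\dots,\alpha_n)$ yield distinct points of the reduced space, there are exactly $2\cdot2^n=2^{n+1}$ equilibria, as claimed.
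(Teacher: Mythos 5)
Your proof is correct and follows essentially the same route as the paper's: the paper works in the torus coordinates $(\beta,\alpha_1,\dots,\alpha_n)$ of \eqref{E:Param-torus}, where $\dot\beta=0$ immediately forces $\sin\beta=0$ (hence $\omega=0$, $u\neq0$), and then runs the same induction $\sin\alpha_1=0\Rightarrow\cdots\Rightarrow\sin\alpha_n=0$ before evaluating $R(\alpha)=M+nm$ and counting $2^{n+1}$ points. Your version, phrased directly in $(u,\omega,\alpha)$ and using the energy level set to rule out $u=0$, is an equivalent presentation, with the added (harmless) explicit check that $\dot u=0$ at these points.
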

\begin{proof}
We make use of the restricted equations \eqref{E:betaEq} and \eqref{E:alphaEqonTorus}. 
Imposing $\dot \beta=0$ in  \eqref{E:betaEq} implies
\begin{equation*}
\sin \beta=0.
\end{equation*}
Under this condition,  from \eqref{E:alphaEqonTorus} we see that 
we can only have $\dot \alpha_1 =0$ if $\sin \alpha_1=0$. Now assume that
$\dot \alpha_k=0$ and $\sin \beta=\sin \alpha_1=\dots = \sin \alpha_{k-1}=0$. From  \eqref{E:alphaEqonTorus}
it follows that $\sin \alpha_k=0$. This shows that the only equilibria of the system occur at the points where
\begin{equation}
\label{E:EquiCondAngles}
\sin \beta=\sin \alpha_1=\dots = \sin \alpha_{n}=0.
\end{equation}

Now use  \eqref{E:defQR} to show that the value of $R(\alpha)$ at these points is the total mass of the 
system $M+nm$. The proof is completed by using \eqref{E:Param-torus}.
\end{proof}

Assume that we are at an equilibrium configuration with energy $E$. 
The condition $\omega=0$ implies that the leading car moves along
a straight line. It moves at the constant speed $\sqrt{\frac{2E}{M+nm}}$ as indicated  by \eqref{E:EquilConds}. The motion
is forward
(in the direction from the midpoint of the wheel's axis to the center of mass) if $u>0$ or backwards if $u<0$.

On the other hand,  the condition $\sin \alpha_k=0$  in \eqref{E:EquilConds}
implies that the $k^{th}$ trailer $\mathcal{B}_k$ is aligned with  the $(k-1)^{th}$ trailer 
$\mathcal{B}_{k-1}$.  Denote by
\begin{equation}
\label{E:defsigma}
\sigma_k=\cos \alpha_k=\pm 1, \qquad k=1, \dots, n.
\end{equation}
If $\sigma_k=1$, then $\mathcal{B}_k$ is `behind' 
$\mathcal{B}_{k-1}$. If  $\sigma_{1}=-1$ then $\mathcal{B}_{1}$ `overlaps' with $\mathcal{B}_{0}$ since we assume that
the wheels of the leading car are located towards the rear of the vehicle. 
More generally, if $\sigma_{k+1}=-1$ then $\mathcal{B}_{k+1}$ `overlaps' with $\mathcal{B}_{k-1}$. 
See Figure \ref{F:Overlap}.
The situation resembles the equilibria of a chain of $n$  coupled planar pendula. 
\begin{figure}[h]
    \centering
    \includegraphics[width=18cm]{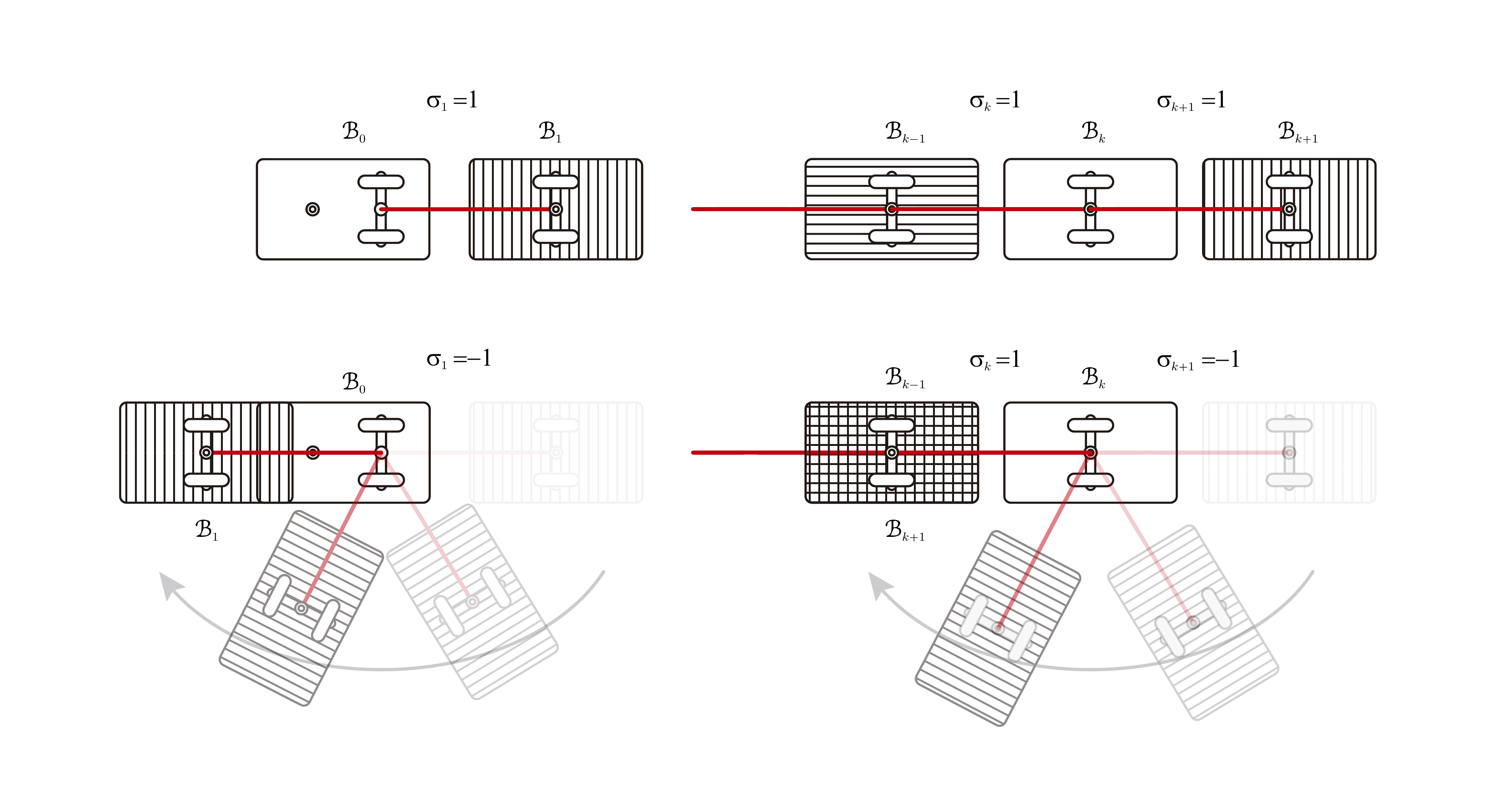}
    \caption{Illustration of equilibrium states with $\sigma_1=\pm1$ and with $\sigma_k=1$ and $\sigma_{k+1}=\pm 1$.}
    \label{F:Overlap}
\end{figure}

Therefore, the equilibria of the reduced system correspond to solutions where the convoy moves at constant speed
along a straight line
with all of the trailers aligned, with the possibility of overlaps between the cars. Of course the only 
 physically attainable equilibria occur when $\sigma_1=\cdots=\sigma_n=1$ so that there are no overlaps.
There are two of such equilibria, corresponding to forward and backward motion of the convoy. We shall see that the
former is asymptotically stable whereas the second one is asymptotically unstable.

\subsection{Stability of equilibria} We perform a linear stability analysis of the equilibria found in the previous subsection.
We will consider the system restricted to the constant energy $(n+1)$-torus $\mathcal{E}=E$, so we work
with equations \eqref{E:betaEq} and \eqref{E:alphaEqonTorus}.
To obtain the linearization of these  equations  around an equilibrium, we shall use the relations
\begin{equation*}
R(\alpha)= M+nm, \qquad \frac{\partial R}{\partial \alpha_j}(\alpha)=0, \quad j=1,\dots, n,
\end{equation*}
that hold if $\alpha=(\alpha_1, \dots, \alpha_n)$ satisfies the equilibrium conditions \eqref{E:EquilConds}.

Fix an equilibrium of equations \eqref{E:betaEq} and \eqref{E:alphaEqonTorus} satisfying \eqref{E:EquiCondAngles}. 
Denote by
\begin{equation*}
\sigma_0=\cos \beta =\pm 1.
\end{equation*}
Forward motion of the convoy corresponds to $\sigma_0=1$ and backward motion to $\sigma_0=-1$.

A straightforward calculation shows that the constant $(n+1)\times (n+1)$ matrix that defines the linearization of \eqref{E:betaEq} 
and \eqref{E:alphaEqonTorus} around the given equilibrium is 
\begin{equation*}
\frac{1}{\ell}\sqrt{\frac{2E}{M+nm}}
\left ( \begin{array}{ccccccc} \frac{-Ma \ell}{J_0+Ma^2}\sigma_0 & 0 & 0 & 0&  \cdots & 0 \\
\ell\sqrt{\frac{M+nm}{J_0+Ma^2}}\sigma_0 & -\sigma_1 \sigma_0  & 0 & 0&  \cdots & 0  \\
0 & \sigma_1\sigma_0 & - \sigma_2\sigma_1\sigma_0 & 0& \cdots & 0  \\
0 & 0 & \sigma_2 \sigma_1\sigma_0 & -\sigma_3 \sigma_2\sigma_1\sigma_0 &  \cdots & 0  \\
\vdots & \vdots & &&\ddots & \vdots \\
0 & 0 & \cdots & \cdots & & -\prod_{j=0}^n\sigma_j \end{array} \right ).
\end{equation*}
Since this matrix is lower diagonal, its eigenvalues are the diagonal components
\begin{equation*}
\lambda_0=-\frac{Ma}{J_0+Ma^2}\sqrt{\frac{2E}{M+nm}}\sigma_0, \qquad \lambda_k=-
\frac{1}{\ell}\sqrt{\frac{2E}{M+nm}}\prod_{j=0}^k\sigma_j, \quad k=1, \dots, n.
\end{equation*}
Therefore all of the equilibria are hyperbolic. Moreover, we immediately conclude the following about the nature of the equilibria.
\begin{enumerate}
\item If at least one of $\sigma_k$ with $k=1, \dots, n$, is negative (there are overlaps between the trailers) then there are
positive and negative eigenvalues and the  equilibrium is
a saddle point. 
\item If $\sigma_k=1$ for all $k=1, \dots, n$ (there are no overlaps) and $\sigma_0>0$ (the convoy is moving forwards) then
all of the eigenvalues are negative and the equilibrium is a stable node.
\item If $\sigma_k=1$ for all $k=1, \dots, n$ (there are no overlaps) and $\sigma_0<0$ (the convoy is moving backwards) then
all of the eigenvalues are positive and the equilibrium is an unstable node.
\end{enumerate}

An illustration of the numerical integration of the dynamics in the case $n=1$ is given in Figure \ref{F:anotzero}.
 Here the constant energy surface is a two-torus. It is seen the the generic initial conditions approach the stable
 (respectively unstable) node
 as $t\to \infty$ (respectively as $t\to -\infty$).
 Figure  \ref{F:anotzero} also shows  the  trajectory of the leading car $\mathcal{B}_0$ on  the plane
  for a generic initial condition. It asymptotically approaches steady motion along a straight line.
The curve traced by $\mathcal{B}_0$ closely resembles the paths followed by the Chaplygin sleigh (see e.g. \cite{NeiFu,Bo2009}).

\begin{figure}[h]
\centering
\subfigure[Phase portrait on a fundamental region of the torus $(\alpha_1, \beta)$. There are 4 equilibrium points (up to equivalence modulo $2\pi$). A stable node at $(0,0)$,  an unstable node at $(0,\pi)$ and two
saddle points at $(\pi,0)$ and $(\pi,\pi)$. ]{\includegraphics[width=9cm]{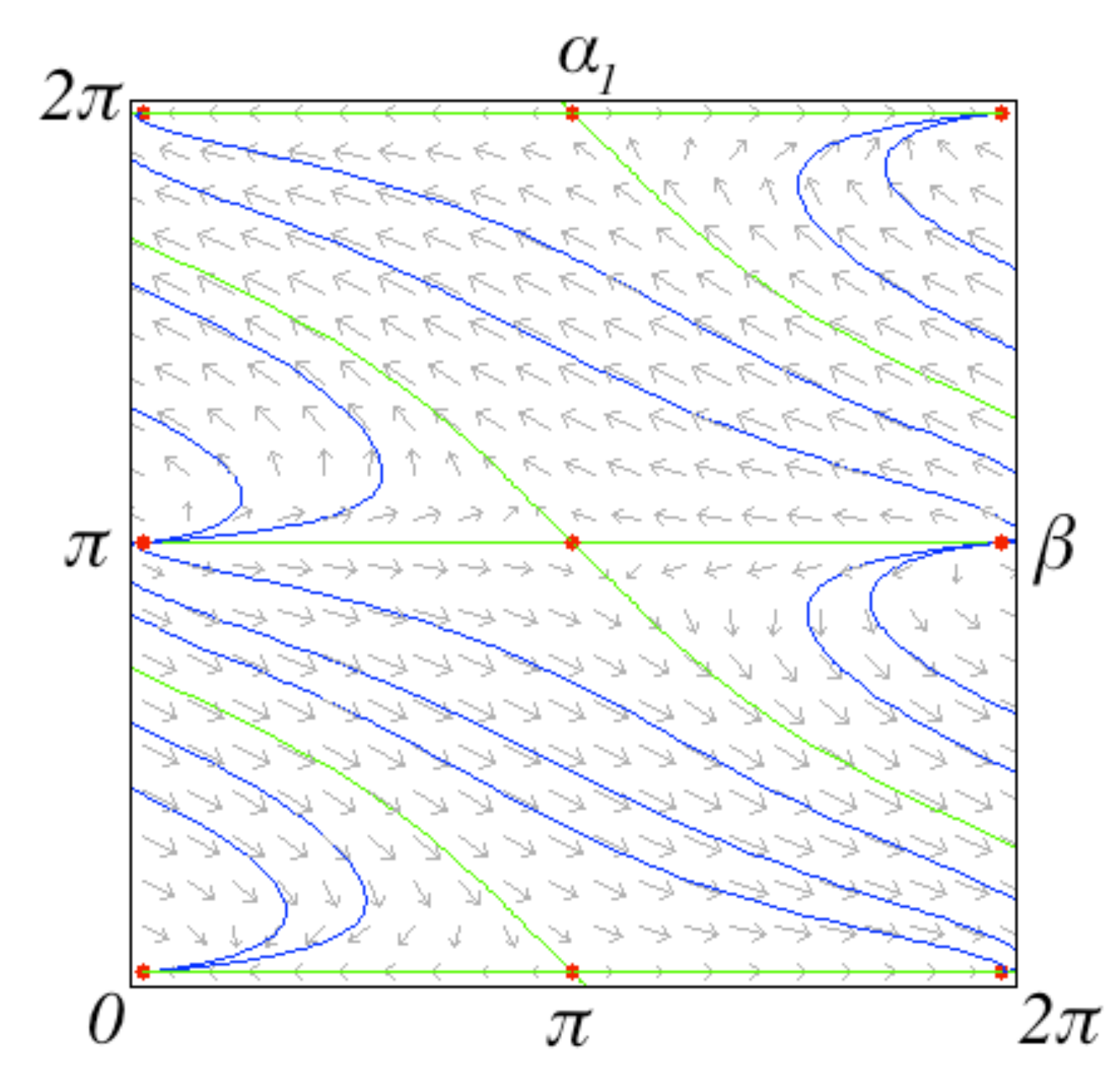}} \qquad 
\subfigure[Trajectory of  $\mathcal{B}_0$ on the plane. Asymptotic behavior towards straight line motion.]{\includegraphics[width=6cm]{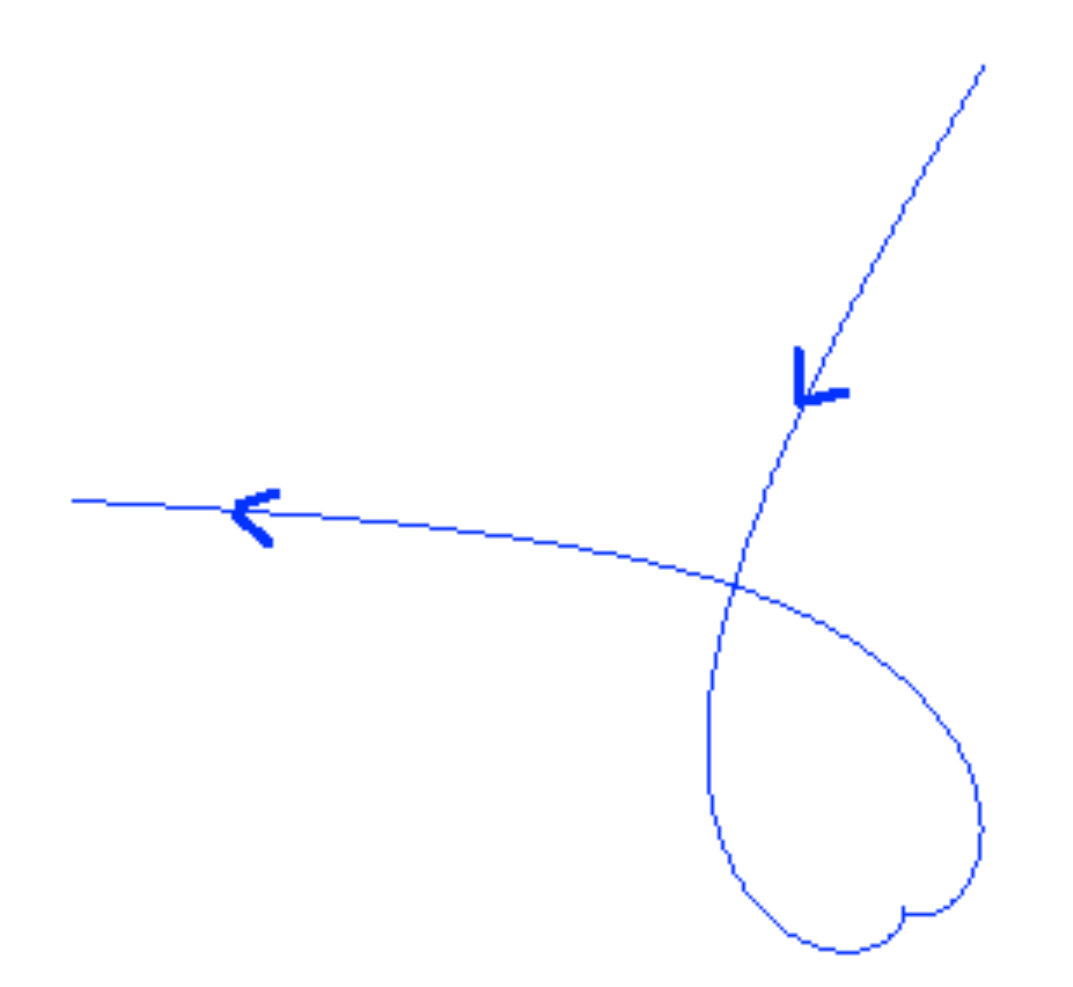}}
\caption{Phase portrait of the restriction of the reduced flow to an energy level two torus and generic trajectory of the leading car $\mathcal{B}_0$ in
the case $n=1$.} \label{F:anotzero}
\end{figure}

\section{The case $a=0$}
\label{S:dynaeq0}

If $a=0$ the dynamics changes substantially. 
From \eqref{E:MainRed} we see that $\omega$ 
is constant throughout the motion.

If $\omega=0$, the classification of the equilibrium solutions of  \eqref{E:MainRed} coincides with the
description given in Proposition \ref{P:Classification-Equil}, and the stability of the solution  with
\begin{equation*}
\sigma_0=\sigma_1=\cdots=\sigma_n=1
\end{equation*}
is analyzed in \cite{Furta}.

For the rest of the paper we consider the case where $\omega\neq 0$. The classification of equilibria is more
involved as the following proposition shows.
\begin{proposition}
\label{P:Equil-Cond-aeq0}
Suppose that $a=0$ and that $\omega=\omega_0\neq 0$. 
A necessary and sufficient condition for the existence of equilibria of \eqref{E:MainRed} 
with $u=u_0$ is that
\begin{equation}
\label{E:Equil-Cond-aeq0}
n\ell^2\omega_0^2\leq u_0^2.
\end{equation}
\end{proposition}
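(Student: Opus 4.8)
The plan is to set $a=0$ in the reduced equations \eqref{E:MainRed} and to characterize the simultaneous vanishing of all their right-hand sides. Since $a=0$ forces $\dot\omega\equiv 0$, the value $\omega=\omega_0$ is automatically constant, so an equilibrium with $u=u_0$, $\omega=\omega_0$ is exactly a point at which $\dot u=0$ and $\dot\alpha_1=\dots=\dot\alpha_n=0$. I would first observe that $u_0\neq 0$ is forced, since $u_0=0$ gives $\dot\alpha_1=\omega_0\neq 0$; hence the quantity $\mu:=\ell\omega_0/u_0$ is well defined and $\dot\alpha_1=0$ reads $\sin\alpha_1=\mu$.

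The key simplification is to rephrase the conditions $\dot\alpha_k=0$ in terms of the body angular velocities. Because $\dot\alpha_k=\dot\theta_{k-1}-\dot\theta_k$ (with $\theta_0=\theta$), the vanishing of all the $\dot\alpha_k$ is equivalent to $\dot\theta_1=\dots=\dot\theta_n=\dot\theta=\omega_0$; substituting the constrained expressions \eqref{E:Constraints-u-omega-theta} for the $\dot\theta_k$ turns this into the clean system
\begin{equation*}
\sin\alpha_k\prod_{j=1}^{k-1}\cos\alpha_j=\mu,\qquad k=1,\dots,n.
\end{equation*}
Writing $P_k:=\prod_{j=1}^k\cos\alpha_j$ with $P_0=1$, squaring the $k$-th relation gives $\sin^2\alpha_k=\mu^2/P_{k-1}^2$, hence $\cos^2\alpha_k=1-\mu^2/P_{k-1}^2$ and the telescoping recursion $P_k^2=P_{k-1}^2-\mu^2$, whose solution is $P_k^2=1-k\mu^2$.

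Necessity and sufficiency then both drop out of this formula. For necessity, at any equilibrium $P_{k-1}\neq 0$ for each $k$ (otherwise the left-hand side above would vanish while $\mu\neq 0$), so the recursion is legitimate and $P_n^2=1-n\mu^2$; since $P_n^2=\prod_{j=1}^n\cos^2\alpha_j\geq 0$ we obtain $n\mu^2\leq 1$. For sufficiency, assuming $n\mu^2\leq 1$ I would build a solution explicitly by taking the positive roots $P_{k-1}=\sqrt{1-(k-1)\mu^2}$ and setting $\sin\alpha_k=\mu/P_{k-1}$, $\cos\alpha_k=\sqrt{1-\mu^2/P_{k-1}^2}\geq 0$. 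Here $1-(k-1)\mu^2\geq \mu^2>0$ for $k\leq n$ (using $\mu\neq 0$), so each $P_{k-1}$ is nonzero, while $k\mu^2\leq 1$ guarantees $|\sin\alpha_k|\leq 1$; thus the $\alpha_k$ are well defined. Rewriting $n\mu^2\leq 1$ as $n\ell^2\omega_0^2\leq u_0^2$ yields \eqref{E:Equil-Cond-aeq0}.

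The one step where I expect to have to be careful is confirming that $\dot u=0$ is automatic and imposes no further restriction. The cleanest route is energy conservation: with $a=0$ the conserved energy \eqref{E:Energy} is $\mathcal{E}=\frac{1}{2}\big(R(\alpha)u^2+J_0\omega^2\big)$, and differentiating along the flow gives $R(\alpha)u\,\dot u+\frac{1}{2}\big(\sum_k\partial_{\alpha_k}R\,\dot\alpha_k\big)u^2+J_0\omega\,\dot\omega=0$. At our candidate point $\dot\alpha_k=0$ and $\dot\omega=0$, so, since $R(\alpha)>0$ and $u_0\neq 0$, this forces $\dot u=0$. (Alternatively one substitutes $A_k=0$ for $k\geq 2$, $A_1=-\sin\alpha_1/\ell$, and $\omega_0=u_0\sin\alpha_1/\ell$ into the first line of \eqref{E:MainRed} and uses \eqref{E:aux2} to see the two surviving terms cancel.) The only remaining subtlety is the bookkeeping of degenerate cases in which some $\cos\alpha_j$ vanishes, but these are handled uniformly by the observation that $P_{k-1}\neq 0$ is forced at every equilibrium.
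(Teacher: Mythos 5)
Your proposal is correct and follows essentially the same route as the paper: both reduce the equilibrium conditions to the system $\sin\alpha_k\prod_{j=1}^{k-1}\cos\alpha_j=\ell\omega_0/u_0$, solve it by the telescoping recursion $\prod_{j=1}^{k}\cos^2\alpha_j=1-k\ell^2\omega_0^2/u_0^2$, and read off \eqref{E:Equil-Cond-aeq0} from non-negativity, with sufficiency coming from the observation that $\dot u=0$ is then automatic. Your only (harmless) variations are deriving the angle system from the $\dot\theta_k$ rather than from the $A_k$ via \eqref{E:SumProdbecomesSum}, and using conservation of $\mathcal{E}$ to get $\dot u=0$ where the paper substitutes directly into \eqref{E:MainRed} and invokes \eqref{E:aux2}.
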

\begin{proof}
Equations \eqref{E:aux1} imply that at such equilibria one must have $u_0\neq 0$ and
\begin{equation}
\label{E:aux-proof-prop-aeq0}
\omega_0+ u_0A_1=0, \qquad A_k=0, \quad k=2, \dots, n.
\end{equation}
Using \eqref{E:DefAk}, the above equations can be written as
\begin{equation*}
\begin{split}
\sin \alpha_1& = \frac{\ell \omega_0}{u_0}, \\
\cos \alpha_1 \sin \alpha_2& = \frac{\ell \omega_0}{u_0} ,\\
\vdots \; & \\
\prod_{k=1}^{n-1}\cos \alpha_k \sin \alpha_n & =  \frac{\ell \omega_0}{u_0} .
\end{split}
\end{equation*}
One can inductively show that the solutions to the above equations satisfy
\begin{equation*}
\cos^2 \alpha_k = \frac{u_0^2-k\ell^2\omega_0^2}{u_0^2-(k-1)\ell^2\omega_0^2}, \qquad 
\sin^2 \alpha_k= \frac{\ell^2\omega_0^2}{u_0^2-(k-1)\ell^2\omega_0^2}, \qquad k=1,\dots, n.
\end{equation*}
It follows that a necessary condition for the existence of equilibria is  that 
\begin{equation*}
\frac{\ell^2\omega_0^2}{u_0^2-(n-1)\ell^2\omega_0^2} \leq 1,
\end{equation*}
which is equivalent to \eqref{E:Equil-Cond-aeq0}. That this condition is also sufficient is seen by noting that
if \eqref{E:aux-proof-prop-aeq0} holds (and $a=0$), the equation for $\dot u$ in \eqref{E:MainRed} becomes
\begin{equation*}
\dot u=\frac{1}{R}\left ( \frac{1}{2}\frac{\partial R}{\partial \alpha_1} +\frac{Q}{\ell^2} \right )u_0\omega_0.
\end{equation*}
But the right hand side of this equation is zero by \eqref{E:aux2}.
\end{proof}

The equations for $x, y$ and $\theta$ in \eqref{E:Constraints-u-omega} show that at an equilibrium 
solution with $u_0, \omega_0\neq 0$ the car $\mathcal{B}_0$ moves along a circle 
of radius $\frac{u_0}{\omega_0}$ at constant angular speed. Proposition \eqref{P:Equil-Cond-aeq0}
shows that the radius of this circle must be at least $\sqrt{n}\ell$.

We do not attempt to study the stability properties of the system in this case.
 Instead, we treat the
case of one trailer in detail.

\subsection{The case of one trailer.} If $a=0$ and $n=1$ then, denoting $\alpha_1=\alpha$, we have
\begin{equation*}
R(\alpha)=M+m\cos^2 \alpha +\frac{J}{\ell^2}\sin^2\alpha,
\end{equation*}
and the
 equations \eqref{E:MainRed} become
\begin{equation}
\label{E:aeq0-neq1}
\begin{split}
\dot u &=\frac{(m\ell^2 -J) u \cos\alpha \sin \alpha(\ell \omega -u\sin \alpha)}{\ell( (M+m\cos^2 \alpha)\ell^2 + J\sin^2\alpha)}\, , \\
\dot \omega & =0, \\
\dot \alpha &=\omega -\frac{u\sin \alpha}{\ell}. 
\end{split}
\end{equation}
For physical reasons it is natural to assume
\begin{equation}\label{E:Inertia-Condition}
J<m\ell^2.
\end{equation}

Equations \eqref{E:aeq0-neq1} are easily integrated using the conservation of energy. First notice that 
the level sets of the constants $E$ and 
 $\omega$ are invariant circles parametrized by $\alpha$
 \begin{equation*}
u= \pm \sqrt{\frac{2E-J_0\omega ^2}{R(\alpha)}}.
\end{equation*}

We fix a value of  $\omega=\omega_0>0$  and we study  the behavior of the flow along the invariant circle\footnote{The other cases, when either $\omega_0$ or $u$ or both are negative, are analogous.}
\begin{equation}
\label{E:Invariant-circles}
u=  \sqrt{\frac{2E-J_0\omega_0^2}{R(\alpha)}}.
\end{equation}
The evolution of $\alpha$ along the circle is given by
\begin{equation}
\label{E:Eqalpha}
\dot \alpha= \frac{\ell \sqrt{R(\alpha)}\omega_0- \sqrt{2E-J_0\omega_0^2}\sin \alpha}{\ell \sqrt{R(\alpha)}}
\end{equation}
which leads to the quadrature
\begin{equation}
\label{E:Quadrature}
\frac{\ell \sqrt{R(\alpha)} \, d\alpha}{\ell \omega_0 \sqrt{R(\alpha)} - \sqrt{2E-J_0\omega_0^2} \sin \alpha} = dt.
\end{equation}

 Now notice that 
 the inequality \eqref{E:Inertia-Condition} implies 
\begin{equation*}
M+\frac{J}{\ell^2} \leq R(\alpha) \leq M+m.
\end{equation*}
 Using \eqref{E:Invariant-circles} and the above inequality,
we see that along the solutions
of the system we have
\begin{equation*}
\begin{split}
u^2-\ell^2\omega_0^2 &= \frac{2E-J_0\omega_0^2}{R(\alpha)}- \ell^2 \omega_0^2 \\
&\leq  \frac{2E-J_0\omega_0^2}{M+\frac{J}{\ell^2}} - \ell^2 \omega_0^2 \\ &= \frac{2(E-E_c)}{M+\frac{J}{\ell^2}}
\end{split}
\end{equation*}
where
\begin{equation*}
E_c:=\frac{1}{2} \left (J_0 + J+ M\ell^2 \right) \omega_0^2.
\end{equation*}
The dynamics along the invariant circle  \eqref{E:Invariant-circles} will depend on how $E$ compares with 
$E_c$.

{\bf Case 1. If $\frac{1}{2}J_0\omega_0^2 \leq E<E_c$}. 

It follows from Proposition \ref{E:Equil-Cond-aeq0} (or directly from \eqref{E:aeq0-neq1}) that 
there are no equilibrium points of the system in this case. Hence, the dynamics along the invariant circle \eqref{E:Invariant-circles} is periodic. The energy dependent period $T=T(E)$ is obtained using \eqref{E:Quadrature}:
\begin{equation}
\label{E:period}
T=\int_0^{2\pi}\frac{\ell \sqrt{R(\alpha)} \, d\alpha}{\ell \omega_0 \sqrt{R(\alpha)} - \sqrt{2E-J_0\omega_0^2} \sin \alpha}.
\end{equation}
Using that $E<E_c$ one can verify that the denominator does not vanish so this integral is 
convergent.

{\bf Case 2. If $E=E_c$.} 

In this case there is exactly one equilibrium point along the invariant circle \eqref{E:Invariant-circles} given by 
\begin{equation*}
u=\ell \omega_0, \qquad \alpha=\frac{\pi}{2}.
\end{equation*}
Hence, the invariant circle consists of a homoclinic connection and a critical point.

{\bf Case 3. If $E>E_c$.}

We have
\begin{equation}
\label{E:Aux-Ineq}
0<\frac{\ell^2\omega_0}{2E-J_0\omega_0^2}<\frac{\ell^2\omega_0}{2E_c-J_0\omega_0^2}=
\frac{\ell^2}{J+ M \ell^2}.
\end{equation}
The graph of the function
\begin{equation*}
f(\alpha)=\frac{\sin^2\alpha}{R(\alpha)}
\end{equation*}
for $0\leq \alpha \leq \pi$ is shown in Figure \ref{F:Plotfalpha}. It is symmetrical with respect to
$\alpha=\pi/2$ where it achieves its maximum value of $\frac{\ell^2}{J+M\ell^2}$. It
attains every value between $0$ and $\frac{\ell^2}{J+M\ell^2}$ exactly two times. It follows 
from \eqref{E:Aux-Ineq} that there exist exactly two values of $\alpha$, that we denote by $\alpha^{(1)}$ and
$\alpha^{(2)}$, such that 
\begin{equation*}
0<\alpha^{(1)}<\frac{\pi}{2}<\alpha^{(2)}<\pi \quad \mbox{and} \quad  f(\alpha^{(j)})=\frac{\ell^2\omega_0}{2E-J_0\omega_0^2}, \quad j=1,2.
\end{equation*}
\begin{figure}[h]
    \centering
    \includegraphics[width=10cm]{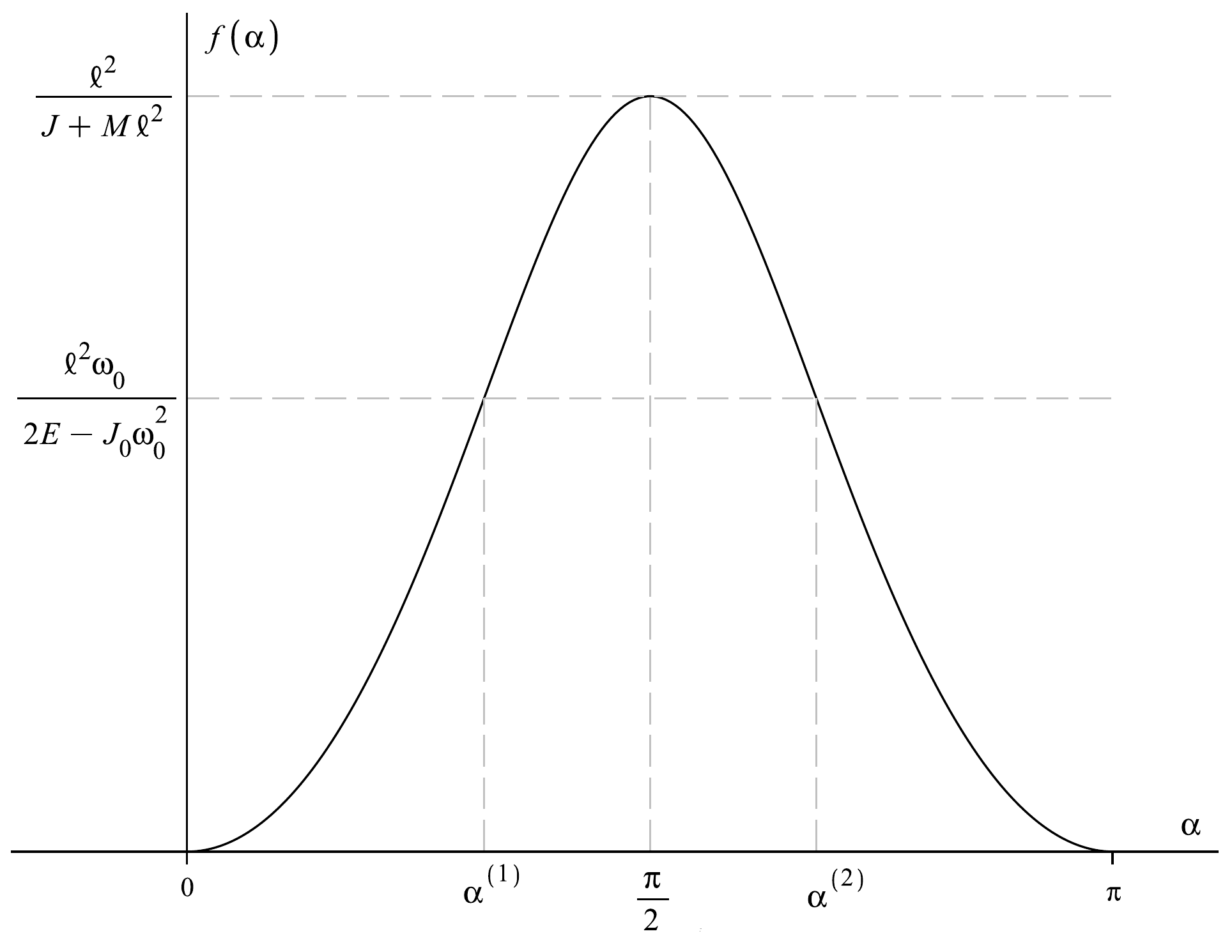}
    \caption{Plot of $f(\alpha)$. The horizontal line at height $\frac{\ell^2\omega_0}{2E-J_0\omega_0^2}$ is drawn
     under the assumption that $E>E_c$.}
    \label{F:Plotfalpha}
\end{figure}

A short calculation shows that the two points
\begin{equation*}
\alpha= \alpha^{(j)}, \qquad u= \frac{\omega_0\ell}{\sin (\alpha^{(j)})}, \qquad j=1,2
\end{equation*}
are the only equilibria of \eqref{E:aeq0-neq1} contained in the invariant circle \eqref{E:Invariant-circles}.

Given that $\sin (\alpha^{(j)})>0$, $j=1,2$, in a neighborhood of these points, we can write the evolution
equation \eqref{E:Eqalpha} for $\alpha$ as
\begin{equation*}
\dot \alpha = \omega_0-\sqrt{2E-J_0\omega_0^2}\sqrt{f(\alpha)}.
\end{equation*}
 Since $f$ is increasing at $\alpha^{(1)}$ and decreasing at $\alpha^{(2)}$ we conclude that 
 the equilibrium 
 \begin{equation}
 \label{E:Eqalpha-u}
\alpha= \alpha^{(j)}, \qquad u= \frac{\omega_0\ell}{\sin (\alpha^{(j)})}
\end{equation}
is asymptotically stable if $j=1$ and asymptotically unstable if $j=2$. A physical interpretation of these equilibria   
can be given with the aid of Figure \ref{F:Phys-Int}.
\begin{figure}[h]
    \centering
    \includegraphics[width=12cm]{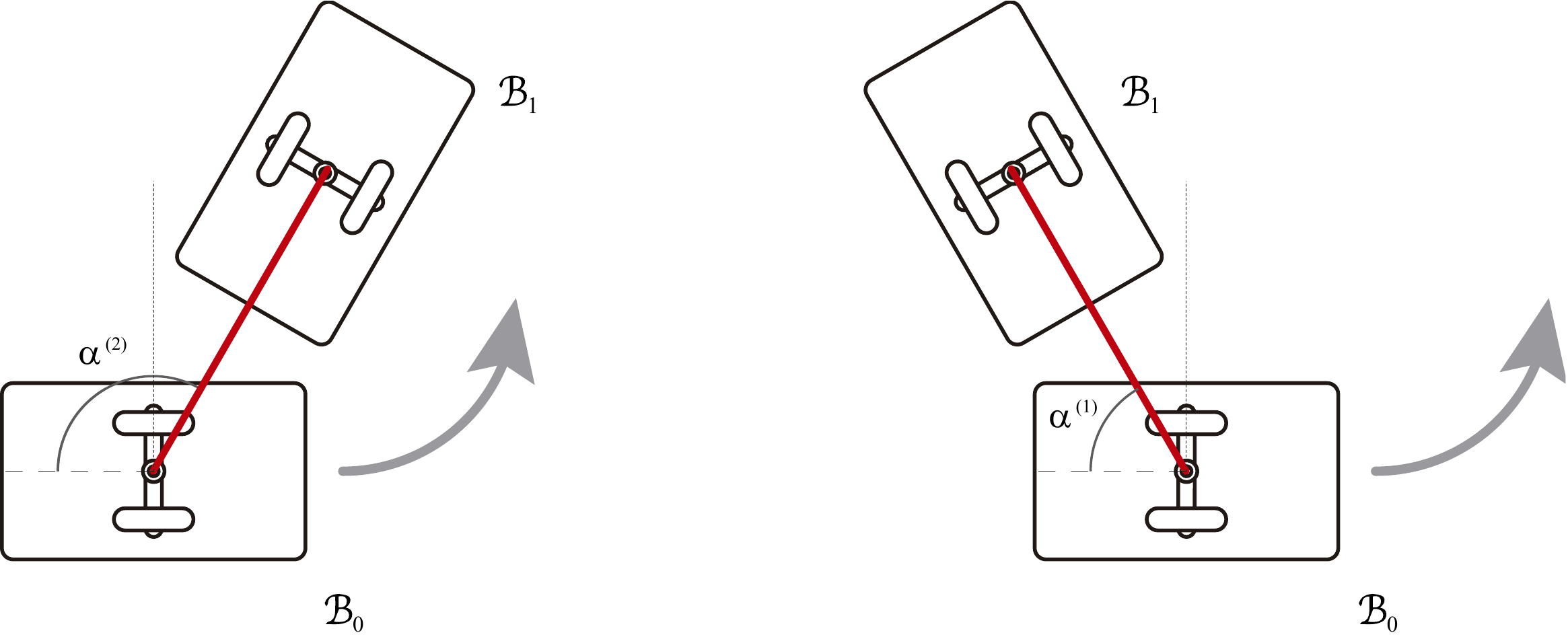}
    \caption{Illustration of the unstable and stable equilibria if $E>E_c$. The configuration on the left is unstable and on the 
    right is stable. The arrows indicate the motion of $\mathcal{B}_0$ (in agreement with our working assumption
    that $u, \omega_0>0$). }
    \label{F:Phys-Int}
\end{figure}

Hence, in this case, the invariant circle \eqref{E:Invariant-circles} consists of two heteroclinic orbits that 
connect the unstable critical point with the stable one.

Figure \ref{F:InvCircle} illustrates the qualitative dynamics on the invariant circle  \eqref{E:Invariant-circles}
in the three different energy regimes treated above. Figure \ref{F:Cylinder} illustrates the dynamics of \eqref{E:aeq0-neq1} 
on the cylinder $\omega=\omega_0>0$. 

\begin{figure}[h]
    \centering
    \includegraphics[width=15cm]{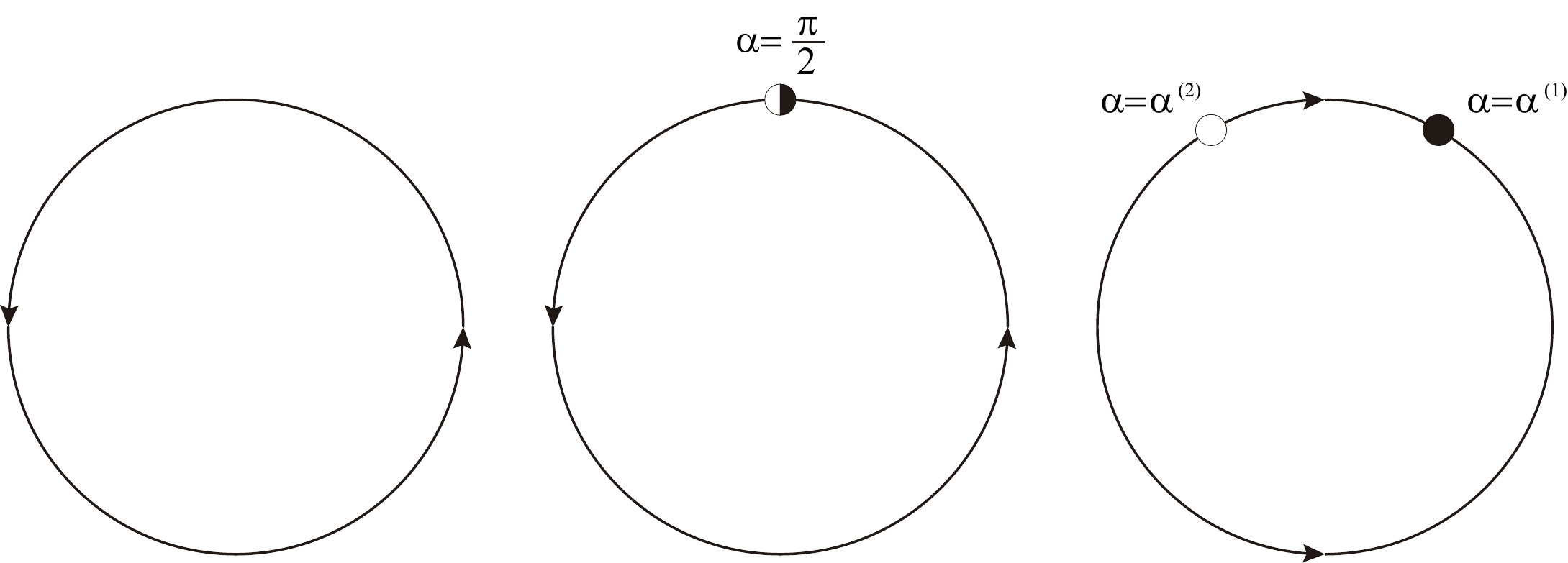}
    \caption{Qualitative behavior of the dynamics along the invariant circle  \eqref{E:Invariant-circles} for the different
    energy regimes, $E<E_c$ on the left, $E=E_c$ on the middle and   $E>E_c$ on the right.}
    \label{F:InvCircle}
\end{figure}

\begin{figure}[h]
    \centering
    \includegraphics[width=7cm]{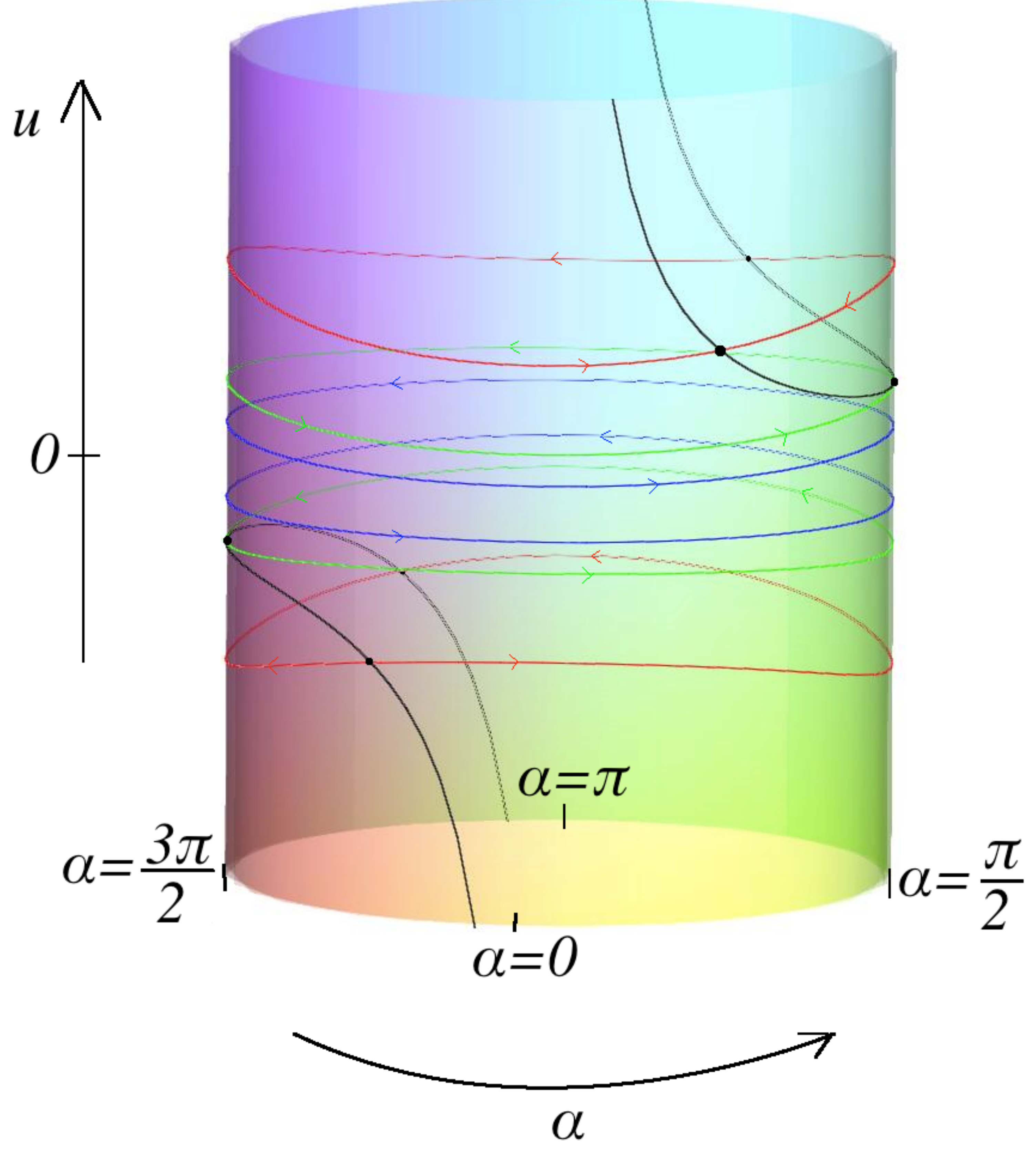}
    \caption{Dynamics on the cylinder $\omega=\omega_0>0$. The vertical axis is $u$ and the angular variable is $\alpha$. 
    It is seen that the motion takes place along the invariant circles. The critical points are indicated in black. The blue trajectories have $E<E_c$, the green ones
    $E=E_c$, and the red ones $E>E_c$. }
    \label{F:Cylinder}
\end{figure}

Our analysis shows that $E_c$ is a critical value of the energy that separates two different qualitative behaviors.
Subcritical energy values lead to periodic motion in the reduced space. On the other hand, supercritical energy values
correspond to asymptotic behavior on the reduced space. A similar phenomenon is  observed in the motion of
a Chaplygin sleigh  in a perfect fluid in the presence of circulation \cite{Fed13}.

\subsubsection{The motion on the plane} 

With the information given above, we can understand how the $2$-body convoy moves in the plane. First note
 that  in the absence of the trailer $\mathcal{B}_1$  (i.e. if $m=0$ and $J=0$) then $ u=u_0, \; 
  \omega=\omega_0$ for constants $u_0$ and $\omega_0$. Hence, the motion of $\mathcal{B}_0$ on the plane for a generic initial condition is uniform
 circular motion on a circle of radius $r=u_0/|\omega_0|$.
 
Our analysis in the previous section shows that if $E\geq E_c$ 
in the limit as $t\to \pm \infty$ the $2$-body convoy on the plane
approaches uniform circular motion. Continuing with the assumption that $u, \omega_0>0$, 
from \eqref{E:Eqalpha-u} we conclude that the radii of the limit circles is 
\begin{equation*}
r=\frac{\ell}{\sin \alpha^{(1)}}=\frac{\ell}{\sin \alpha^{(2)}}.
\end{equation*}
The value of  $\sin \alpha^{(1)}=\sin \alpha^{(2)}$ is decreasing and approaches $0$ as the energy $E\to \infty$ so the radius $r\to \infty$
for large energies. 
Figure \ref{F:asympmotion}  shows a trajectory of the leading car obtained numerically. The trailer $\mathcal{B}_1$
locks itself at a fixed angle with respect to $\mathcal{B}_0$ as $t\to \pm \infty$. The limit angles
are  $\alpha^{(2)}$ when $t\to -\infty$ and  $\alpha^{(1)}$ when $t\to \infty$. 

\begin{figure}[h]
\centering
\subfigure[$E<E_c$. Generic quasiperiodic trajectory.]{\includegraphics[width=4cm]{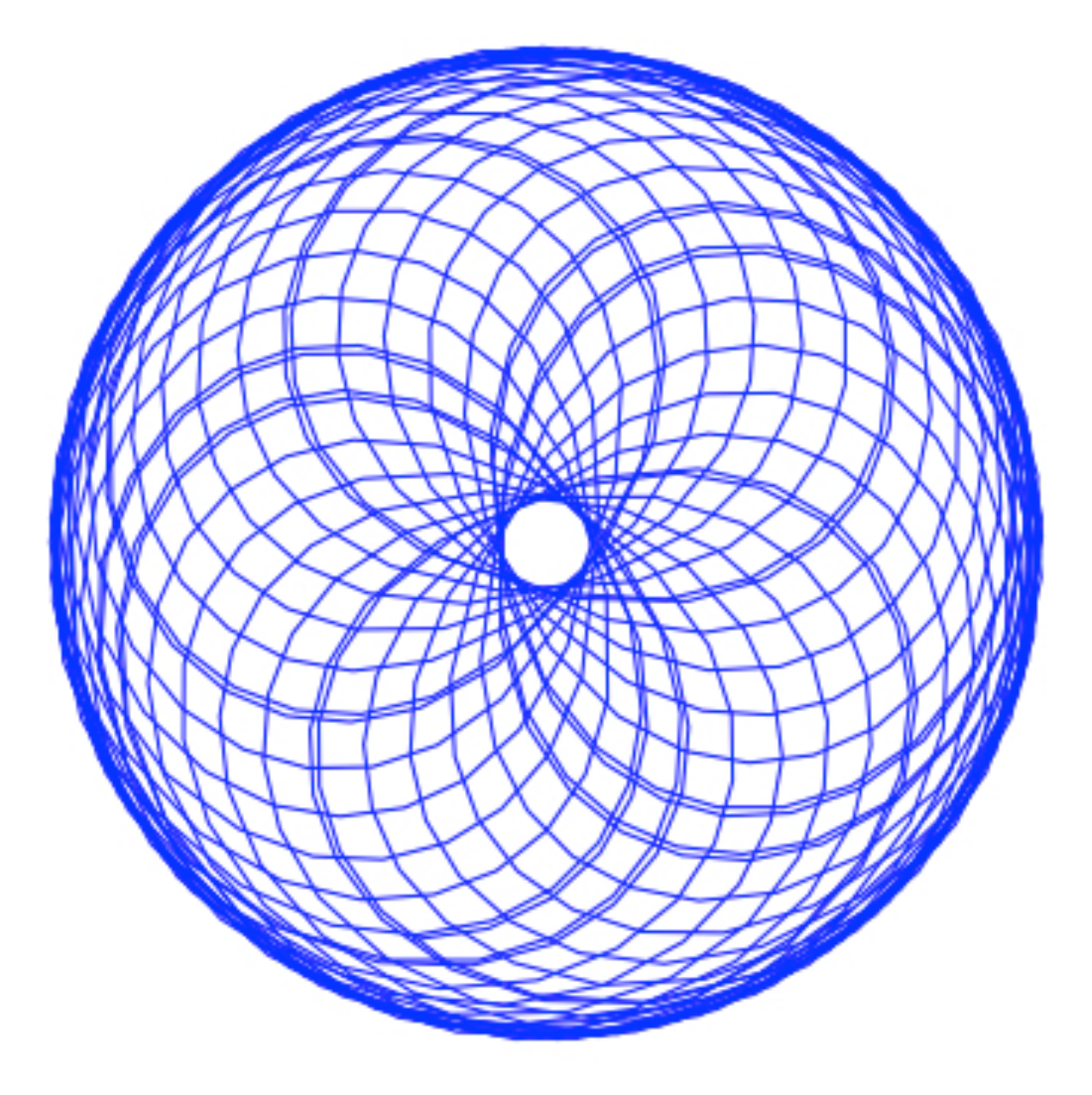}} \qquad 
\subfigure[$E<E_c$. Periodic trajectory.]{\includegraphics[width=4cm]{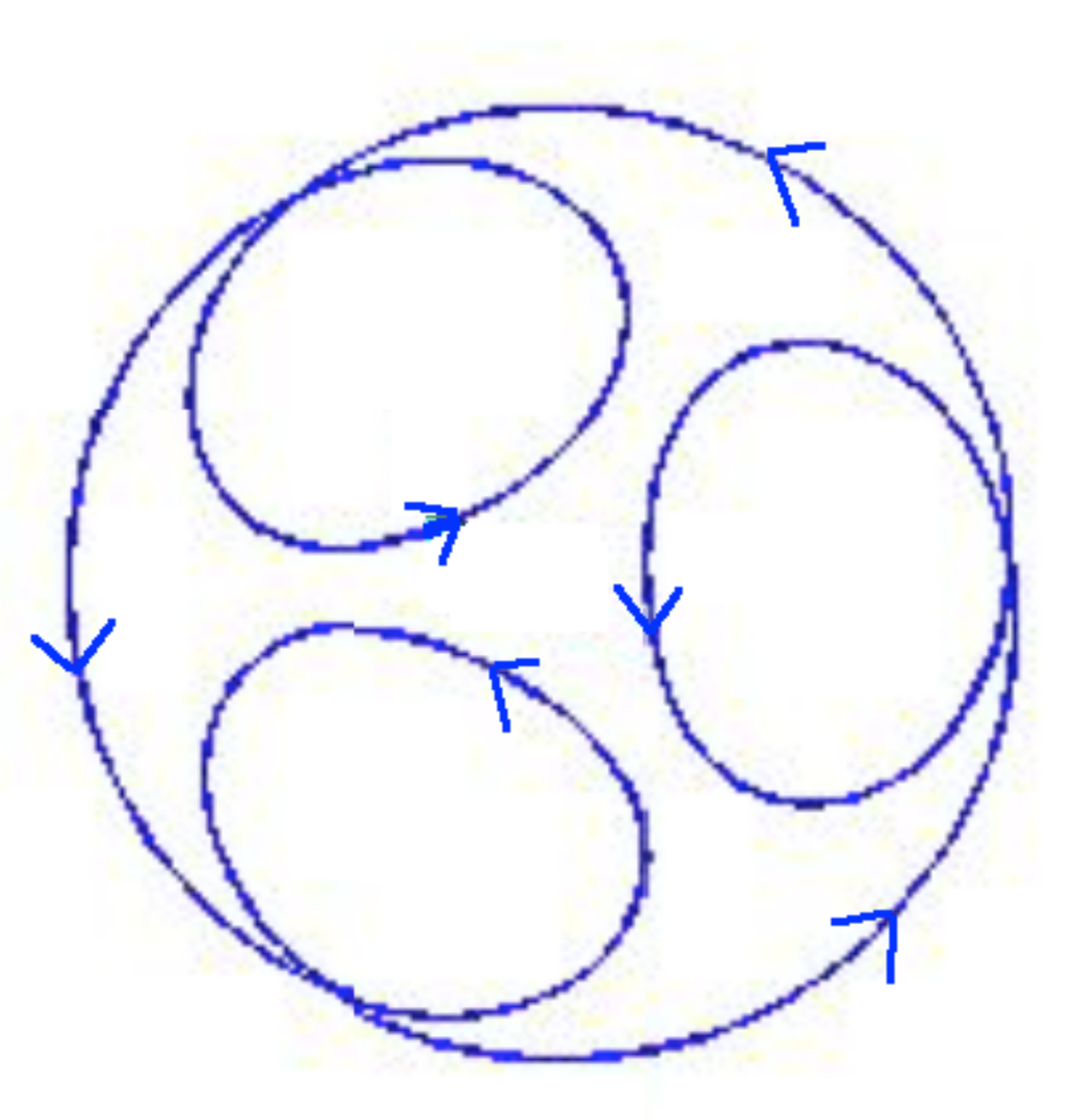}} \qquad
\subfigure[$E>E_c$. Asymptotic behavior towards limit circles.]{\includegraphics[width=4cm]{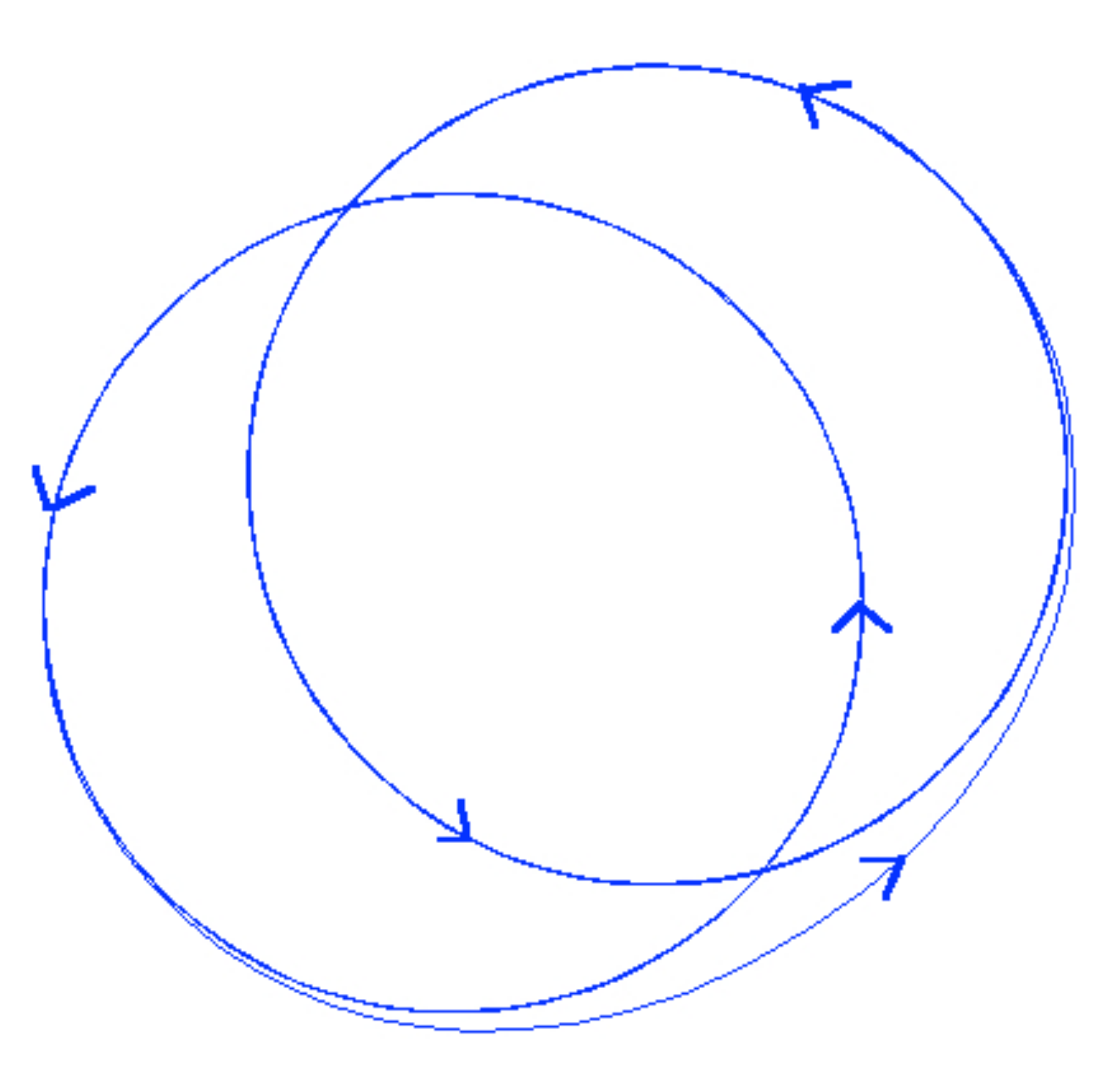}}
\caption{Trajectory of the leading car $\mathcal{B}_0$ for different energy values. } \label{F:asympmotion}
\end{figure}

On the other hand,  if $0<E< E_c$, the dynamics of $\alpha$ and $u$ is periodic with period \eqref{E:period}.
After one period, the position of the leading car $\mathcal{B}_0$ suffers a rotation by an angle $\Delta \theta = \omega_0T$,
followed by a translation by $(\Delta x, \Delta y)$ with
\begin{equation*}
\Delta x + i\Delta y=\int_0^Tu(t) e^{i\omega_0t}\, dt =\sqrt{2E-J_0\omega_0^2}\int_0^T\frac{e^{i\omega_0t}}{\sqrt{R(\alpha(t))}} \, dt,
\end{equation*}
where the dependence of $\alpha$ on $t$ is determined by \eqref{E:Quadrature} and we have assumed that $\theta(0)=0$.

Generically, the angle $\Delta \theta$ is an irrational multiple of $2\pi$ and the motion of $\mathcal{B}_0$ in the
plane is quasiperiodic  with its trajectory contained in an annulus or a circle. It is also possible 
to have periodic behavior if  $\frac{\Delta \theta}{2\pi}\in \mathbb{Q}$ or unbounded trajectories if  
$\frac{\Delta \theta}{2\pi}\in \mathbb{Z}$ and $\Delta x^2 + \Delta y^2\neq 0$. Figure \ref{F:asympmotion}  shows a periodic and a quasiperiodic trajectory
for $\mathcal{B}_0$ obtained numerically.


\section{Singular configurations}
\label{S:sing}

The {\em degree of nonholonomy}  is an important  notion that arises in 
nonlinear control theory. It expresses the level of Lie-bracketing of the elements in the constraint distribution that is needed
to span the tangent space at each configuration. This concept comes up, for instance, when trying to quantify
 the complexity associated with steering the system from one point to another (see e.g. \cite{Laumondbook}).

When the number of  trailers in our system is greater than or equal to two, this degree is not constant
throughout the configuration space.  To fix ideas we treat the case $n=2$ in detail. According to \eqref{E:DefVF}
\begin{equation*}
\begin{split}
Z_1=\cos \theta \frac{\partial}{\partial x} + \sin \theta \frac{\partial}{\partial y} -\frac{\sin \alpha_1}{\ell}  \frac{\partial}{\partial \alpha_1}+ \left ( \frac{\sin \alpha_1 - \cos\alpha_1\sin \alpha_2}{\ell} \right ) \frac{\partial}{\partial \alpha_2}, \qquad 
Z_2= \frac{\partial}{\partial \theta}+ \frac{\partial}{\partial \alpha_1},
\end{split}
\end{equation*}
form a basis of the constraint distribution $\mathcal{D}$. Direct calculations show
\begin{equation*}
\begin{split}
[Z_1,Z_2]=\sin \theta \frac{\partial}{\partial x} - \cos \theta \frac{\partial}{\partial y} +\frac{\cos \alpha_1}{\ell}  \frac{\partial}{\partial \alpha_1}- \left ( \frac{\cos \alpha_1 + \sin \alpha_1\sin \alpha_2}{\ell} \right ) \frac{\partial}{\partial \alpha_1}, \\
[Z_1,[Z_1,Z_2]]=\frac{1}{\ell^2}\frac{\partial}{\partial \alpha_1} - \left ( \frac{1+\cos \alpha_2 }{\ell^2} \right ) \frac{\partial}{\partial \alpha_2}, \qquad [Z_2,[Z_1,Z_2]]=Z_1, \\
[Z_1,[Z_1,[Z_1,Z_2]]]=\frac{\cos \alpha_1}{\ell^3}\frac{\partial}{\partial \alpha_1} - \cos \alpha_1 \left ( \frac{2+\cos \alpha_2 }{\ell^3} \right ) \frac{\partial}{\partial \alpha_2}, \qquad [Z_2,[Z_1,[Z_1,Z_2]]]=0.
\end{split}
\end{equation*}

Let $q\in Q$ be a configuration of the system with  $\cos \alpha_1\neq 0$. Then the vector fields
  $Z_1, Z_2, [Z_1,Z_2]$, $[Z_1, [Z_1,Z_2]]$ and $[Z_1,[Z_1,[Z_1,Z_2]]]$
form a basis of the tangent space $T_qQ$. The element $[Z_1,[Z_1,[Z_1,Z_2]]]$ in the basis
is said to have {\em length} 4 since one needs to compute iterated brackets of four elements in the basis of $\mathcal{D}$ to generate it. It is  clear that it is not possible to construct a basis for $T_qQ$ with iterated brackets of $Z_1$ and $Z_2$
and whose elements have length less than 4. We then say that the {\em degree of nonholonomy} at configurations
with  $\cos \alpha_1\neq 0$ is 4.

On the other hand, at configurations $q$ with $\cos \alpha_1=0$, the vector field $[Z_1,[Z_1,[Z_1,Z_2]]]$ vanishes. One can complete  $Z_1, Z_2, [Z_1,Z_2]$, $[Z_1, [Z_1,Z_2]]$  to a basis of $T_qQ$ by adjoining  the vector field
\begin{equation*}
[ [Z_1,Z_2],[Z_1,[Z_1,Z_2]]]=\frac{\sin \alpha_1}{\ell^3}\frac{\partial}{\partial \alpha_1} - \sin \alpha_1 \left ( \frac{2+\cos \alpha_2 }{\ell^3} \right ) \frac{\partial}{\partial \alpha_2},
\end{equation*}
that has length 5. Hence, the  {\em degree of nonholonomy} at configurations
with  $\cos \alpha_1\neq 0$ is 5.

The latter configurations are called singular and correspond to having $\mathcal{B}_0$ jackknifed, that is, 
 $\mathcal{B}_0$ and
$\mathcal{B}_1$ are perpendicular. It is intuitively clear that maneuvering the system at this configuration is
a more difficult task. The classification of singularities for the $n$-trailer vehicle,  and the degree of nonholonomy 
at each of them,  is given in 
\cite{Jean} for arbitrary $n$. These correspond to  different jackknifing  possibilities for  the bodies in the convoy.
A natural question is to understand what are the effects of these singular configurations on the dynamics, if any. 

 Another example of a nonholonomic system exhibiting singular configurations is  an articulated arm. 
 In recent years there have been different 
 efforts to classify the singularities of the associated  constraint distribution  \cite{Pelletier,Castro}.

To our knowledge, the effect of this kind of singularities on the motion of nonholonomic systems is
unexplored.  We hope to
report on this issue in a future note.

\renewcommand{\thesection}{\sc{Appendix}}

\section{}
%

\renewcommand{\thesection}{A}

The derivation of the evolution equation for $u$ in \eqref{E:MainRed} relies on the method given
in \cite{Grab} to obtain the equations of motion of a mechanical nonholonomic system. This reference includes a more
detailed description of the geometry and considers more general cases than what we need. Here we
only outline the main steps to obtain (a simple version of) their equations (3.7) and (3.8). Our presentation is
done  without proof.

Consider a nonholonomic system on a configuration manifold $Q$ of dimension $N$ with Lagrangian $\Lag:TQ\to \R$ of mechanical type and
constraint distribution $\mathcal{D}$ of constant rank $k<N$ that is bracket generating. 
The condition that $\Lag$ is of mechanical type means that it is the sum of  kinetic minus potential energy, and 
that the kinetic energy defines a Riemannian metric $\mathcal{G}$ on $Q$.

Associated to the  metric $\mathcal{G}$ there is a decomposition $TQ=\mathcal{D}\oplus \mathcal{D}^\perp$, where 
$ \mathcal{D}^\perp$ is the $\mathcal{G}$-orthogonal complement of $\mathcal{D}$, and a projection 
$\mathcal{P}:TQ\to \mathcal{D}$.

The idea is to write down the equations of motion that are consistent with the Lagrange d'Alembert principle using 
quasi-velocities that are adapted to the distribution $\mathcal{D}$. Denote by $q^1, \dots q^N$, local coordinates on 
an open set of $Q$ and by $Z_1, \dots , Z_k$ a basis of sections of $\mathcal{D}$ in such open set. That is, they
are linearly independent vector fields that lie on $\mathcal{D}$. 

Define the scalar functions  $\rho_b^i$ and $\mathcal{C}_{b d}^e$ on $Q$ through the relations\footnote{Here and in what follows we use the convention of sum over repeated indices}
\begin{equation*}
Z_b=\rho_b^i\frac{\partial}{\partial q^i}, \qquad \mathcal{P}([Z_b , Z_d])=\mathcal{C}_{bd}^e
 Z_e, \quad  b,d,e= 1, \dots k, \quad i=1, \dots, N.
\end{equation*}

Let $q\in Q$. Any tangent vector $v\in \mathcal{D}_q\subset T_qQ$  can be written
as
 \begin{equation*}
v=v^b Z_b(q)
\end{equation*}
 for certain scalars $v^b$ (the quasi-velocities). Hence, the value of the restriction of the 
 Lagrangian to $\mathcal{D}$, that we denote
 as $\Lag_c=\Lag|_\mathcal{D}$, can be expressed in terms of the variables $q^1, \dots, q^N, v^1, \dots, v^k$.
Equations (3.7) and (3.8) in \cite{Grab} state that the equations of motion for the nonholonomic system can be written as
\begin{eqnarray}
\nonumber
&&\dot q^i= \rho^i_b v^b, \qquad i=1, \dots, N, \\
\label{E:MotionAppendix}
&&\frac{d}{dt} \left ( \frac{\partial \Lag_c}{\partial v^b}\right ) =-\mathcal{C}_{bd}^ev^d \frac{\partial \Lag_c}{\partial v^e}+ \rho_b^i
 \frac{\partial \Lag_c}{\partial q^i}, \qquad b=1, \dots, k.
\end{eqnarray}
These equations avoid dealing with  Lagrange multipliers. The effect of the 
constraint forces is encoded in the effect of the projector $\mathcal{P}$ on the definition of the structure coefficients 
$\mathcal{C}_{b d}^e$.

\subsection*{Acknowledgments} We are thankful to
R. Ch\'avez-Tovar for his help to produce some of the figures, and to J.C. Marrero and A.L. Castro for useful
conversations and for indicating some references to us. LGN acknowledges
the support received from the project PAPIIT IA103815.

\end{document}